\colorlet{DarkRed}{red!50!black}
\colorlet{DarkGreen}{green!50!black}
\colorlet{DarkBlue}{blue!50!black}
\let\epsilon\varepsilon
\newlength{\commentWidth}
\let\oldnl\nl
\newcommand{\nonl}{\renewcommand{\nl}{\let\nl\oldnl}}
\definecolor{orange}{RGB}{235,90,0}
\definecolor{darkorange}{RGB}{175,30,0}
\definecolor{turkis}{RGB}{131,182,182}
\definecolor{darkturkis}{RGB}{31,82,82}
\definecolor{green}{RGB}{102,180,0}
\definecolor{darkgreen}{RGB}{51,90,0}
\definecolor{myblue}{RGB}{0,0,213}
\definecolor{mydarkblue}{RGB}{0,0,100}
\definecolor{mybrightblue}{HTML}{74B0E4}
\definecolor{mybrighterblue}{HTML}{B3EAFA}
\definecolor{lila}{RGB}{102,0,102}
\definecolor{darkred}{RGB}{139,0,0}
\definecolor{darkyellow}{RGB}{188,135,2}
\definecolor{brightgray}{RGB}{200,200,200}
\definecolor{darkgray}{RGB}{50,50,50}
\definecolor{amaranth}{rgb}{0.9, 0.17, 0.31}
\definecolor{alizarin}{rgb}{0.82, 0.1, 0.26}
\definecolor{amber}{rgb}{1.0, 0.75, 0.0}
\definecolor{green(ryb)}{rgb}{0.4, 0.69, 0.2}
\definecolor{hanblue}{rgb}{0.27, 0.42, 0.81}
\definecolor{grannysmithapple}{rgb}{0.66, 0.89, 0.63}
\newtheorem{theorem}{Theorem}[section]
\newtheorem{lemma}[theorem]{Lemma}
\newtheorem{definition}[theorem]{Definition}
\newtheorem{corollary}[theorem]{Corollary}
\newtheorem{observation}[theorem]{Observation}
\newtheorem{problem}[theorem]{Problem}
\newenvironment{acknowledgements}{%
  % Rename Abstract to Acknowledgements
  \begin{abstract}
}{%
  \end{abstract}
}
\newcolumntype{C}[1]{>{\centering\let\newline\\\arraybackslash\hspace{0pt}}m{#1}}
\DeclareMathOperator{\nullval}{null}
\DeclareMathOperator{\nextval}{next}
\title{Encoding Co-Lex Orders of Finite-State Automata in Linear Space}
\author{
 	Ruben Becker\thanks{Ca' Foscari University,
    Venice, Italy} \\ \texttt{rubensimon.becker@unive.it} \and
    Nicola Cotumaccio\thanks{University of Helsinki, Helsinki, Finland} \\ \texttt{nicola.cotumaccio@helsinki.fi} \and
    Sung-Hwan Kim\footnotemark[1] \\ \texttt{sunghwan.kim@unive.it}\and
    Nicola Prezza\footnotemark[1] \\ \texttt{nicola.prezza@unive.it} \and
    Carlo Tosoni\footnotemark[1] \\ \texttt{carlo.tosoni@unive.it}
}
\date{}
\begin{document}
\maketitle
% \begin{abstract}
%
% \end{abstract}
% \thispagestyle{empty}
% \pagebreak

\begin{abstract}
The Burrows-Wheeler transform (BWT) is a string transformation that enhances string indexing and compressibility. Cotumaccio and Prezza [SODA '21] extended this transformation to nondeterministic finite automata (NFAs) through co-lexicographic partial orders, i.e., by sorting the states of an NFA according to the co-lexicographic order of the strings reaching them. As the BWT of an NFA shares many properties with its original string variant, the transformation can be used to implement indices for locating specific patterns on the NFA itself. The efficiency of the resulting index is influenced by the width of the partial order on the states: the smaller the width, the faster the index. 
The most efficient index for \emph{arbitrary} NFAs currently known in the literature is based on the coarsest forward-stable co-lex (CFS) order of Becker et al.~[SPIRE '24].
In this paper, we prove that this CFS order can be encoded within linear space in the number of states in the automaton. 
The importance of this result stems from the fact that encoding such an order in linear space represents a big first step in the direction of building the index based on this order in near-linear time -- the biggest open research question in this context. The currently most efficient known algorithm for this task run in quadratic time in the number of transitions in the NFA and are thus infeasible to be run on very large graphs (e.g., pangenome graphs). At this point, a near-linear time algorithm is solely known for the simpler case of deterministic automata~[Becker et al., ESA '23] and, in fact, this algorithmic result was enabled by a linear space encoding for deterministic automata~[Kim et al., CPM '23]. 
\end{abstract}

\begin{acknowledgements}
\emph{Ruben Becker, Sung-Hwan Kim, Nicola Prezza, Carlo Tosoni}: funded by the European Union (ERC, REGINDEX, 101039208). Views and opinions expressed are however those of the authors only and do not necessarily reflect those of the European Union or the European Research Council Executive Agency. Neither the European Union nor the granting authority can be held responsible for them. \emph{Nicola Cotumaccio}: funded by the Helsinki Institute for Information
Technology.
\end{acknowledgements}

\newpage
\section{Introduction}

The \emph{Burrows-Wheeler transform (BWT)}~\cite{burrows1994block} is a renowned reversible string transformation that rearranges a string's characters as to improve compressibility, while at the same time allowing the implementation of efficient indices. Although the original BWT was designed for strings, Gagie et al.\ extended this transformation to a particular class of nondeterministic finite automata (NFAs), which they termed \emph{Wheeler NFAs}~\cite{gagie2017wheeler}. Subsequently, Cotumaccio et al.~\cite{cotumaccio2021indexing, cotumaccio2023jacm} managed to extend the transformation to arbitrary NFAs through the concept of \emph{co-lexicographic orders} (abbreviated to co-lex orders), yielding a natural extension of the BWT to NFAs. 
More precisely, co-lex orders are particular partial orders $\leq$ on an NFA’s states such that, if $u \leq v$, with $u, v$ being two states, then the strings reaching $u$ and not reaching $v$ are smaller than or equal to the strings reaching $v$ and not reaching $u$. Such co-lex orders exist for every NFA and can be used to implement indices on the recognized regular language. The efficiency of the index depends on the width of the used co-lex order (a parameter being equal to 1 on Wheeler NFAs and always upper-bounded by the number of states).
Specifically, the smaller is the width of the co-lex order, the faster and smaller is the resulting index. Computing the co-lex order of minimal width is however an NP-hard problem~\cite{gibney2019hardness}. This issue has been addressed by Becker et al.~\cite{becker2023sorting, tosoni2024CFS}, who introduced coarsest forward-stable co-lex (CFS) orders, a new category of partial preorders that are as useful as co-lex orders for indexing purposes. Such CFS orders are guaranteed to exist for every NFA and, furthermore, are unique and can be computed in polynomial time. Moreover, the width of the CFS order is never larger than that of any co-lex order and, in some cases, is asymptotically smaller than the minimum-width co-lex order. As a result, CFS orders enable the implementation of indices in polynomial time, which are never slower than those based on co-lex orders, and that in some cases are asymptotically faster and smaller. However, the state-of-the-art algorithm for computing such CFS orders has quadratic time complexity with respect to the number of transitions in the automaton~\cite[Corollary 1]{tosoni2024CFS}. This quadratic time complexity makes the application of such CFS orders infeasible in practice, e.g., in bioinformatics, where pangenome graphs (i.e., graphs encoding the DNA of a population) are used more and more frequently~\cite{computationalPanGenomics}. Such pangenome graphs fall within the category of big data for which only near-linear time algorithms can be considered feasible~\cite{Teng16}. For this reason, the current main open research problem in this realm is to find an efficient, i.e., near-linear time, algorithm for computing co-lex orders of small width for arbitrary NFAs. For the special case of deterministic finite automata such a near-linear time algorithm is known~\cite[Algorithm 2]{becker2023sorting} and its discovery was preceded by an encoding of this order that is in linear space with respect to the number of states of the automaton~\cite{kim2023faster}. A similar linear space representation for the general case of nondeterministic finite automata is however not known for any of the candidate co-lex orders in the literature. In this paper, we resolve this main problem that hinders us to find an efficient algorithm for computing co-lex orders for arbitrary NFAs. We do so by giving an efficient data structure for the following problem.

%\sout{More precisely, for example, as no near-linear space encoding of the CFS orders of Becker et al.~\cite{tosoni2024CFS} is known, devising an efficient algorithm to compute such orders is currently impossible. In this paper, we resolve this main problem that hinders us to find an efficient algorithm for computing co-lex orders for arbitrary NFAs by giving an efficient data structure for the following problem.}
\begin{problem}\label{main_pr}
    Given a forward-stable NFA, find a data structure for its maximum co-lex order $\leq_{FS}$ that supports queries of the form: given two states $u$ and $v$, determine if $u \leq_ {FS} v$.
\end{problem}
Here a \emph{forward-stable NFA} is an NFA for which the coarsest forward stable partition~\cite{paige1987three} is equal to the partition consisting of all singleton sets. The CFS order on an arbitrary NFA~\cite[Definition 6]{tosoni2024CFS} is actually defined as the maximum co-lex order on the corresponding forward-stable NFA. 
% This forward-stable NFA can be computed in near-linear time using the ordered partition refinement algorithm of Becker et al.~\cite{becker2023sorting} \nicola{well, the forward-stable NFA can actually already be computed with Tarjan's algorithm. The order we compute in ~\cite{becker2023sorting} is not needed to define that NFA.} that is a variant of the classical partition refinement framework of Paige and Tarjan~\cite{PaigeT87} that keeps the partition in a specific order that is consistent with the co-lex order. 
The forward stable NFA is a quotient automaton of the original automaton (thus its size is at most the size of the original automaton). As a result, a data structure for Problem~\ref{main_pr} permits to represent the CFS order of an arbitrary NFA and is thus general enough to represent a co-lex order for an arbitrary NFA.

% \vspace{-5pt}
\subparagraph{Contribution and Main Techniques.}
In this article we give a data structure for Problem~\ref{main_pr} stated above. More precisely, we prove the following theorem. In what follows, we denote with $n$ the number of states and with $m$ the number of transitions of the NFA at hand.
\begin{restatable}[]{theorem}{mainthr}\label{6:thr:final_theorem}
    Given a forward-stable NFA with $n$ states, there exists a data structure for Problem~\ref{main_pr} taking $O(n)$ space and supporting queries in $O(n)$ time.
\end{restatable}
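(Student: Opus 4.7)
The plan is to exploit the recursive nature of the maximum co-lex order $\leq_{FS}$: whether $u \leq_{FS} v$ is determined by the labels of the in-edges entering $u$ and $v$ together with the relative positions of their predecessors in $\leq_{FS}$ itself. Forward-stability is central here because it rules out redundant refinements: two states are distinguished only when the strings reaching them genuinely differ, so the recursion bottoms out cleanly.

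First, I would recast $u \leq_{FS} v$ in a form suitable for algorithmic use. Following the Cotumaccio--Prezza framework, one sorts states along a linear extension of $\leq_{FS}$ and reads off the BWT-like column recording the source of every in-edge. From this I would derive a \emph{witness-walk} characterization: $u \leq_{FS} v$ is decidable by a coordinated back-walk from $u$ and $v$ that at each step descends to canonical predecessors (for instance, the $\leq_{FS}$-maximal predecessor of $u$ and the $\leq_{FS}$-minimal predecessor of $v$). The crucial structural claim is that this walk terminates after $O(n)$ steps, which I would prove by exhibiting a well-founded parameter on pairs of states --- say, a lexicographic pair of indices in the fixed linear extension --- that strictly decreases along the walk, thereby forbidding repetitions.

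The data structure would then consist of: (i) a linear extension of $\leq_{FS}$ stored as an ordered array of the $n$ states, and (ii) a constant amount of per-state information recording the in-edge label together with pointers to the one or two canonical predecessors needed to drive the back-walk. The total space is $O(n)$, crucially independent of $m$. On a query $(u,v)$, one runs the coordinated back-walk using these pointers in $O(1)$ amortized work per step, and the first in-label mismatch decides the order (or else reports incomparability) within $O(n)$ steps.

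I expect the main obstacle to be twofold. First, one must justify that a constant number of canonical predecessor pointers per state suffices to drive the back-walk, even though a state may have many incoming edges; this will likely hinge on a structural lemma showing that forward-stability forces the ``relevant'' predecessors, at each level of the recursion, to cluster into contiguous intervals in the linear extension that can be pinpointed by a single endpoint. Second, rigorously establishing the $O(n)$ termination bound: a naive pairwise recursion could revisit state pairs and blow up, so one must identify the correct well-founded ordering on pairs $(u',v')$, and verify that it genuinely descends under the canonical predecessor choice. These two ingredients --- a succinct canonical-predecessor encoding and a monotone back-walk --- are, in my view, the core of the theorem.
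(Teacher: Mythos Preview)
Your high-level skeleton matches the paper's: store a linear extension of $\leq_{FS}$, store per-state canonical predecessor pointers that encode a supremum walk for $u$ and an infimum walk for $v$, and answer a query by a coordinated back-walk. But the proposal has a genuine gap at exactly the place where the paper does its real work.

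The failure mode is the \emph{merging case}. Write $P_u^{\sup}=(u_i)_{i\ge 1}$ and $P_v^{\inf}=(v_i)_{i\ge 1}$. Your procedure walks back until the first in-label mismatch and declares the outcome. This is correct when $u_i\neq v_i$ for all $i$ up to the mismatch (the paper's Lemmas~\ref{lem:5:sup_inf_co_lex} and~\ref{lem:5:sup_inf_vs_walks}). But it can happen that $u_{j'}=v_{j'}$ for some $j'$ \emph{before} any label mismatch, while $u_i<v_i$ strictly for $i<j'$. At that point the preceding-pair witness $(u_{j'},v_{j'})\rightrightarrows(u,v)$ collapses, and neither a label mismatch further back nor the absence of one decides $u\leq_{FS}v$: both outcomes are possible (cases (e) and (f) of the paper's decision tree). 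Your proposal has no mechanism here. The paper resolves this with an additional $O(1)$ words per state --- the integers $\phi(u,P_u^{\inf})$ and $\gamma(u,P_u^{\sup})$ of Definition~\ref{def:6:phi_gamma} --- which record how deep along the canonical walk there exists an \emph{alternative} walk to $u$ that branches off and ends at a state incomparable with the canonical one (Definition~\ref{def:6:min_max_confl}). Lemma~\ref{lem:6:main_lemma} shows these conflict depths decide the merging case exactly. Without something playing this role, your data structure is simply incomplete. Relatedly, for the conflict argument to go through the canonical walks cannot be arbitrary infimum/supremum walks; they must be \emph{left-minimal} / \emph{right-maximal} in the sense of Definition~\ref{def:6:left_min_rig_max}, and the paper spends all of Section~\ref{section: leftmost walk existence} proving such walks exist and are encodable by a single function $p:Q\to Q$.

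Your termination argument is also off. The back-walk is over \emph{walks}, not paths: states and even state-pairs can repeat (the infimum/supremum strings are ultimately periodic), so no well-founded order on pairs will strictly decrease step by step. The paper's $O(n)$ bound instead comes from Lemma~\ref{2nbound}: $\sup I_u$ and $\inf I_v$, if distinct, already differ within the first $2n-1$ characters, so the back-walk need only be unrolled for $O(n)$ steps regardless of repetitions. This is a property of the LCP structure of the set of infimum/supremum strings, not of the state-pair dynamics you describe.
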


Here space is measured in RAM words of $\Theta(\log n)$ bits. In order to better put our result in context we observe that there are two trivial solutions to Problem~\ref{main_pr}. (1)
% \begin{remark}\label{stupid_remark1}
    % Given a forward-stable NFA with $n$ states, 
    Explicitly storing the $n^2$ pairs of the co-lex order yields a data structure for Problem~\ref{main_pr} that takes $O(n^{2})$ bits and supports queries in $O(1)$ time.
% \end{remark}
% In fact, a partial order $\leq$ over a set $V$ can be represented using a binary matrix $B$, where $B[i][j] = 1$ if and only if $v_{i} \leq v_{j}$, for the $i$-th and $j$-th elements of $V$.
% \begin{remark}\label{stupid_remark2}
    (2) Storing the input NFA takes $O(m)$ space and the NFA inherently represents its maximum co-lex order $\leq_{FS}$. It is however unclear how to support queries efficiently in this case. 
% \end{remark}
% In fact, any forward-stable NFA inherently represents its maximum co-lex order $\leq$, and such an NFA can be stored using $O(m)$ space. However, it is not clear how to efficiently support queries of the form $u \leq v$ within this representation. 
% Note that both approaches of Remarks~\ref{stupid_remark1} and~\ref{stupid_remark2} 
Both of these approaches require $\Omega(n^{2})$ bits to be stored in the worst-case as the number of transitions may be $\Theta(n^{2})$. 

Our data structure that satisfies the properties in Theorem~\ref{6:thr:final_theorem} relies on three main techniques. (i)~We assume to have computed a \emph{co-lex extension} $\le$, i.e., a total order that is a superset of the maximum co-lex order of a forward-stable NFA (see Definition~\ref{linear_ext} for the formal definition).
Such a co-lex extension can be obtained by running the ordered partition refinement algorithm of Becker et al.~\cite{becker2023sorting}. (ii)~For each state, we store a \emph{left-minimal infimum walk} $P_{u}^{\inf}$ and a \emph{right-maximal supremum walk} $P_{u}^{\sup}$ to $u$. An infimum (supremum) walk to a state $u$ is a walk encoding the lexicographic smallest (largest) string reaching $u$ from the initial state. An infimum (supremum) walk $P_u$ to a state $u$ is left-minimal (right-maximal) if, whenever it intersects with another infimum (supremum) walk $P_u'$, then the predecessor in $P_u$ is smaller (larger) than or equal to the predecessor in $P_u'$ according to the co-lex extension.
We study this type of walks in Sections~\ref{section: infimum supremum walks} and~\ref{section: leftmost walk existence}. (iii)~For each state $u$, we furthermore store two integers $\phi(u,P_{u}^{\inf})$ and $\gamma(u,P_{u}^{\sup})$ that we use in order to encode the deepest states on the infimum walk $P_{u}^{\inf}$ (supremum walk $P_{u}^{\sup}$) that is in \emph{infimum} (\emph{supremum}) \emph{conflict} with $P_{u}^{\inf}$ $(P_u^{\sup})$. We introduce this concept of infimum and supremum conflicts in Section~\ref{section: inf sup conflicts}. Intuitively, a state $\bar{u}$ in $P_{u}^{\inf}$ is in infimum conflict with $P_{u}^{\inf}$ if there exists a state $\hat{u}$ that is incomparable with $\bar u$ according to the maximum co-lex order and both $\bar{u}$ and $\hat{u}$ can reach $u$ with the same string $\alpha$.

Figure~\ref{dectree} shows the decision tree used by our data structure in order to determine comparability with respect to the CFS order $\le_{FS}$ between two states $u$ and $v$ based on the above three concepts.
Let $u$ and $v$ be two states. If $u=v$, then $u\le_{FS} v$ holds by reflexivity, case (a) in Figure~\ref{dectree}. Otherwise, if $v<u$ (here $<$ denotes means $v\le u$ with respect to the co-lex extension and $v\neq u$), we can conclude $\neg (u \le_{FS} v)$, case (b) in Figure~\ref{dectree}. Otherwise, let $P_u^{\sup}$ be the right-maximal supremum walk to $u$ and let $P_{v}^{\inf}$ be the left-minimal infimum walk to $v$. Imagine traversing the two walks from $u$ and $v$ backwards yielding a sequence of pairs $(u_i, v_i)_{i\ge 1}$. While traversing the walks we can construct the strings $\sup I_u$ and $\inf I_v$. If $\sup I_u\le \inf I_v$, we know that $u\le_{FS} v$, case (c) in Figure~\ref{dectree}.
Otherwise, when comparing the pairs $u_i, v_i$ with respect to the co-lex extension $\le$, we are guaranteed to find a pair $u_j, v_j$ such that $v_j < u_j$. We distinguish two cases, if for each $i \in [j-1]$, we have $u_i < v_i$ as in case (d) of Figure~\ref{dectree}, then $\neg (u \leq_{FS} v)$. Otherwise, there exists a minimal integer $j' \leq j$ such that $u_{j'} = v_{j'}$. In this case, the comparability of $u$ and $v$ can be decided using the infimum/supremum conflicts. Consider the maximum integer $h$ such that either $u_{h}$ is in sup conflict with $P_{u}^{\sup}$ or $v_{h}$ is in inf conflict with $P_{v}^{\inf}$. If $h \geq j'$ (case (e) in Figure~\ref{dectree}), we know $\neg(u \leq_{FS} v)$. 
In fact, the opposite would imply that either $P_{u}^{\sup}$ is not right-maximal or $P_{u}^{\inf}$ is not left-minimal. 
Finally, if $h < j'$ (case (f) in Figure~\ref{dectree}), we conclude that $u \leq_{FS} v$. 
The details of the data structure are presented in Section~\ref{section: data structure}. 
We remark that the existence of our left-minimal infimum and right-maximal supremum walks is actually proved independently of the labels in the graph through an unlabeled analogue that we call leftmost/rightmost walks. These walks represent a combinatorial object in unlabeled directed graphs that may be of independent interest. As a central ingredient of our proof, we show constructively (i.e., through an algorithm) that such leftmost/rightmost walks are guaranteed to exist for any (unlabeled) directed graph and can be represented by a linear space function that encodes the predecessor of each node in their leftmost/rightmost walk. This is shown in Section~\ref{section: leftmost walk existence}. 
We proceed with preliminaries.

\begin{figure}[ht!]
\begin{center}
\resizebox{0.95\textwidth}{!}{
\begin{tikzpicture}[
dim/.style={minimum size=0.5em, align = center, font={\footnotesize}, below = 0em}, 
scale = 0.55,
s/.style={font={\footnotesize}},
dots/.style={text centered}]

    \node[dim] (0) at (-0.5,0) {$u = v$};
    \node[dim] (1) at (-0.5,-2.3) {$u \leq_{FS} v$\\Definition~\ref{def:3:colex_order}\\(a)};
    \node[dim] (2) at (3.2,0) {$v < u$};
    \node[dim] (3) at (3.2,-2.3) {$\neg(u \leq_{FS} v)$\\Corollary~\ref{lem:4:prec_pairs_colex},\\ Lemma~\ref{lem:4:inv_prop}\\(b)};
    \node[dim] (4) at (7.5,0) {$\sup I_{u} \leq \inf I_{v}$};
    
    \node[dim] (5) at (7.5,-2.3) { $u \leq_{FS} v$\\Lemma~\ref{lem:5:sup_inf_co_lex}\\(c)};

    \node[dim] (6) at (13,0) {$\forall i \in [j-1], u_{i} < v_{i}$};

    \node[dim] (7) at (21.5,0) {$\max\{ \gamma^{j'}(u, P_{u}^{\sup}), \phi^{j'}(v, P_{v}^{\inf}) \} \geq j'$};

    \node[dim] (8) at (13,-2.3) {$\neg(u \leq_{FS} v)$\\Lemma~\ref{lem:5:sup_inf_vs_walks}\\(d)};

    \node[dim] (9) at (19.5,-2.3) {$\neg(u \leq_{FS} v)$\\Lemma~\ref{lem:6:main_lemma}\\(e)};

    \node[dim] (10) at (23.5,-2.3) {$u \leq_{FS} v$\\Lemma~\ref{lem:6:main_lemma}\\(f)};

    \draw[-Stealth, s] (0) to node [left] {yes} (1);
    \draw[-Stealth, s] (0) to node [above] {no} (2);
    \draw[-Stealth, s] (2) to node [left] {yes} (3);
    \draw[-Stealth, s] (2) to node [above] {no} (4);
    \draw[-Stealth, s] (4) to node [above] {no} (6);
    \draw[-Stealth, s] (6) to node [above] {no} (7);
    \draw[-Stealth, s] (4) to node [left] {yes} (5);
    \draw[-Stealth, s] (6) to node [left] {yes} (8);
    \draw[-Stealth, s] (7) to node [left] {yes} (9);
    \draw[-Stealth, s] (7) to node [right] {no} (10);

\end{tikzpicture}
}
\end{center}
\vspace{-5mm}
\caption{Let $\leq_{FS}$ and $\leq$ be the maximum co-lex order (see Definition~\ref{def:3:colex_order}) and a co-lex extension (see Definition~\ref{linear_ext}) of a forward-stable NFA $\mathcal{A}$, respectively. Let $u$ and $v$ be any two  states in $\mathcal A$. Denote with $P_{u}^{\sup}=(u_{i})_{i\geq1}$ a supremum right-maximal walk to the state $u$ and with $P_{v}^{\inf}=(v_{i})_{i\geq1}$ an infimum left-minimal walk to $v$ (see Def.~\ref{def:3:inf_sup_walks} and~\ref{def:6:left_min_rig_max}). The figure shows the decision tree representing all possible cases that may arise when determining whether $u \leq_{FS} v$. Here, $j$ is the smallest integer such that $v_{j} < u_{j}$, while $j'$ is the smallest integer such that $u_{j'} = v_{j'}$. Functions $\phi^{j'}$ and $\gamma^{j'}$ represent the \emph{deepest} states in infimum/supremum conflict with the walks $P_u^{\sup}$ and $P_v^{\inf}$
% \vspace{-10pt}
}
\label{dectree}
\end{figure}
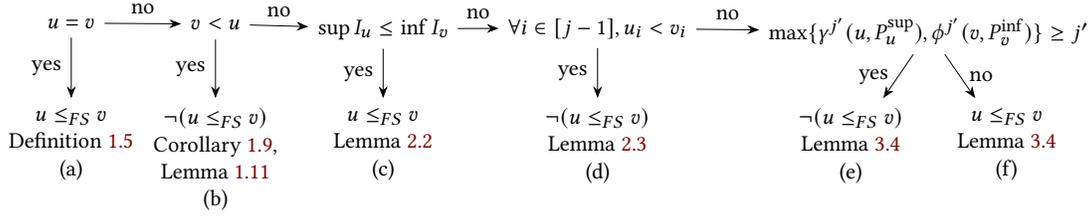

%\subparagraph{Main Techniques.}

\label{section: preliminaries}

\subparagraph{Strings and NFAs.}

Given an alphabet $\Sigma$, we denote by $\Sigma^{*}$ the set of all finite strings over $\Sigma$, where $\varepsilon \in \Sigma^{*}$ is the empty string. Moreover, we define $\Sigma^{\omega}$ as the set containing all strings formed by an infinite enumerable concatenation of characters from $\Sigma$ (i.e., strings of infinite length). In particular, we consider \textit{right-infinite} strings, meaning that $\alpha \in \Sigma^{\omega}$ is constructed from $\varepsilon$ by appending an infinite sequence of characters to its end. Therefore, the operation of prepending a character $a \in \Sigma$ of $\alpha \in \Sigma^{\omega}$ is well defined and yields the string $a\alpha$. The notation $\alpha^{\omega}$, with $\alpha \in \Sigma^{*}$, denotes the concatenation of an infinite (enumerable) number of copies of $\alpha$. In this paper, we assume to have a fixed total order $\leq$ over $\Sigma$. We extend $\leq$ to $\Sigma^{*} \cup \Sigma^{\omega}$ in order to obtain the \emph{lexicographic order} on strings.
% in the following way; given $\alpha, \beta \in \Sigma^{*} \cup \Sigma^{\omega}$, if $\alpha = \varepsilon$ or $\beta = \varepsilon$ (suppose $\alpha = \varepsilon$ w.l.o.g.) we say that $\alpha \leq \beta$. Otherwise, if both $\alpha,\beta \neq \varepsilon$, then consider $\alpha = a\alpha'$ and $\beta = b\beta'$, with $a,b \in \Sigma$. Then we say that $\alpha < \beta$ holds if $a < b$ holds, or $a = b$ and $\alpha' < \beta'$. We call this total order over $\Sigma^{*} \cup \Sigma^{\omega}$ the \textit{lexicographic order} of the strings in $\Sigma^{*} \cup \Sigma^{\omega}$. 
For each $\alpha \in \Sigma^{*} \cup \Sigma^{\omega}$, $\lvert \alpha \rvert = l$ denotes the length of $\alpha$, where $l = \infty$ if $\alpha \in \Sigma^{\omega}$. In addition, for each integer $i$ with $1 \leq i < l + 1$, $\alpha_{i}$ denotes the $i$-th character of $\alpha$, starting from the left. Finally, we denote with $\alpha[i,j]$, where $i,j$ are two integers such that $1 \leq i \leq j < l + 1$, the string of $\Sigma^{*}$ formed by the concatenation of characters $\alpha_{i} \, \alpha_{i+1} \ldots \alpha_{j}$.
% \vspace{-10pt}
%\subparagraph{NFAs and Forward-stable NFAs.}%\label{subsec:2:nfa}

A nondeterministic finite automaton (NFA) is a 4-tuple $(Q,\delta,\Sigma,s)$, where $Q$ represents the set of the states, $\delta : Q \times \Sigma \rightarrow 2^{Q}$ is the automaton's transition function, $\Sigma$ is the alphabet and $s \in Q$ is the initial state. The standard definition of NFAs also includes a set of final states; however, we omitted them since we are not interested in distinguishing between final states and non-final states. Given an NFA $\mathcal{A} = (Q, \delta, \Sigma, s)$, a state $u \in Q$, and a character $a \in \Sigma$, we may use the shortcut $\delta_{a}(u)$ for $\delta(u,a)$. We make the following assumptions on NFAs: 
$(i)$ We assume the alphabet $\Sigma$ to be effective; each character of $\Sigma$ labels at least one edge of the transition function. 
$(ii)$ We assume that every state is reachable from the initial state. 
$(iii)$ We assume that $s$ has only one incoming edge, $s \in \delta(s,\#)$, where $\# \in \Sigma$ does not label any other edge of $\mathcal{A}$. 
$(iv)$ We do not require each state to have an outgoing edge for all possible characters of $\Sigma$. 
$(v)$ We assume that our NFAs are \textit{input-consistent}; an NFA is said to be \textit{input-consistent} if all edges reaching the same state have the same label of $\Sigma$. This assumption is not restrictive since any NFA can be transformed into an input-consistent NFA by replacing each state with at most $\lvert \Sigma \rvert$ copies of itself, without changing its regular language. 
Given an NFA $\mathcal{A} = $($Q$, $\delta$, $\Sigma$, $s$), and a state $u \in Q$, with $u \neq s$, we denote with $\lambda(u)$ the (unique) character of $\Sigma$ that labels the incoming edges of $u$, thus $\lambda(s) = \#$. 
% \begin{definition}[forward-stable NFA]\label{frwstbNFAs}
    We call an NFA \emph{forward-stable} if its coarsest forward-stable partition is formed by singleton sets,
% Finally, we define \textit{forward-stable NFAs} as follows (
see the work of Becker et al.~\cite[Definition 2]{tosoni2024CFS} for the definition of forward-stable partition.
% \end{definition}

% \vspace{-10pt}
\subparagraph{Infimum and supremum walks.}%\label{subsec:3:infsup}
Given an integer $j$, we denote by $[j]$ the set $\{1,\ldots, j\}$. Given an NFA $\mathcal{A}=(Q,\delta,\Sigma,s)$, and a state $u \in Q$, we say that a \textit{walk to} $u$, denoted as $P_{u} = (u_{i})_{i=1}^{l}$, is a non-empty sequence of $l$ states from $Q$ that satisfies the following conditions: 
$(i)$ $u_{1}=u$, and 
$(ii)$ for each $i \in [l-1]$, $u_{i} \in \delta_{a}(u_{i+1})$ where $a=\lambda(u_i)$. Moreover, we denote by $P_{u} = (u_{i})_{i\geq1}$ a walk of infinite length to $u \in Q$. We define $I_{u}$ as the set of all strings $\alpha \in \Sigma^{\omega}$, for which there exists a walk $P_{u} = (u_{i})_{i\geq1}$ such that $\alpha = \lambda(u_{1}) \lambda(u_{2}) \lambda(u_{3}) \ldots$. We now provide the definition of supremum and infimum strings of an NFA's state, which was introduced by Conte et al.~\cite{conte2023dcc} and Kim et al.~\cite[Definition 7]{kim2023faster}.

%\vspace{-3pt}
\begin{definition}[Infimum and supremum strings]\label{def:2:inf_sup_str}
    Let $\mathcal{A}$ be an NFA, and let $u$ be a state of $\mathcal{A}$. Then, the \emph{infimum string} of $u$, denoted $\inf I_{u}$, is the lexicographically smallest string in $I_{u}$. The supremum string of $u$, denoted $\sup I_{u}$, instead is the lexicographically largest string  $I_{u}$.
    %\nicola{I'm not 100\% sure if this definition is well-defined: in \cite{kim2023faster} we used limits. Note that we also considered $I_u$ to contain omega-strings, like you do. Note: that paper had a mistake in the definition of infimum/supremum. It was later fixed in \url{https://doi.org/10.4230/LIPIcs.CPM.2024.1}}\carlo{It should be well-defined, I the definition from a paper of Nicola C. Unfortunately I cannot use your definition, because you also consider strings of finite length, while we do not do that here.}
\end{definition}
%\ruben{Put a comment here that the definition of Kim et al.\ and the one here are equivalent, maybe refer to Nicola C.'s paper, where the same definition is used.}
It is easy to demonstrate that for each $u \in Q$, the strings $\inf I_{u}$ and $\sup I_{u}$ exist (for instance, it follows from Observation~8 in the work of Kim et al.~\cite{kim2023faster}, see also \cite{cotumaccio2023isaac}). We now provide the definition of infimum and supremum walks. 

\begin{definition}[infimum and supremum walks]\label{def:3:inf_sup_walks}
    Let $\mathcal{A}$ be an NFA, and let $u$ be a state in $\mathcal A$. Consider $\alpha = \inf I_{u}$ and $\beta = \sup I_{u}$ and a walk $P_{u} = (u_{i})_{i\geq1}$, then we say that:
    \begin{itemize}
        \item $P_{u}$ is an \emph{infimum walk to} $u$, denoted as $P_{u}^{\inf}$, if for each integer $i \geq 1$, $\lambda(u_{i}) = \alpha_{i}$.
        
        \item $P_{u}$ is a \emph{supremum walk to} $u$, denoted as $P_{u}^{\sup}$, if for each integer $i \geq 1$, $\lambda(u_{i}) = \beta_{i}$.
    \end{itemize}

\end{definition}
% \vspace{-3pt}
In other words, a walk to $u$ is an infimum (supremum) walk, if it is labeled with the infimum (supremum) string of $u$.

% \vspace{-5pt}
\subparagraph{Co-lex orders.}

A partial order $\leq$ on a set $U$ is a reflexive, antisymmetric and transitive relation on $U$. Given a partial order $\leq$ on a set $U$, for each $u,v \in U$, we write $u < v$ if $u \leq v$ and $u \neq v$. We now report the formal definition of co-lex order of an NFA, which was first introduced by Cotumaccio and Prezza \cite[Definition 3.1]{cotumaccio2021indexing}.
\begin{definition}[Co-lex order]\label{def:3:colex_order}
    Let $\mathcal{A}$ be an NFA. A \emph{co-lex order} of $\mathcal{A}$ is a partial order $\leq$ over $Q$ that satisfies the following two axioms:
    \begin{enumerate}
        \item For each $u,v \in Q$, if $u \leq v$, then $\lambda(u) \leq \lambda(v)$.
        \item For each pair $u \in \delta_{a}(u')$ and $v \in \delta_{a}(v')$, if $u < v$, then $u' \leq v'$.
    \end{enumerate}
\end{definition}

Note that every NFA $\mathcal{A}=(Q,\delta,\Sigma,s)$ admits a co-lex order; in fact, the partial order $\leq \coloneqq \{(u,u) : u \in Q \}$ trivially satisfies the two axioms of Definition~\ref{def:3:colex_order}. We say that a co-lex order $\leq$ of an automaton $\mathcal{A}$ is the \textit{maximum} co-lex order of $\mathcal{A}$ if $\leq$ is equal to the union of all co-lex orders of $\mathcal{A}$. Becker et al.~\cite[Lemma 3]{tosoni2024CFS} showed that every forward-stable NFA admits a maximum co-lex order. Hereafter, we will denote by $\leq_{FS}$ the maximum co-lex order of an NFA. 

\begin{observation}\label{trv_obs}
    Let $\mathcal{A}$ be a forward-stable NFA, and let $u,v$ be two states of $\mathcal{A}$ such that $\lambda(u) < \lambda(v)$. Then $u <_{FS} v$.
\end{observation}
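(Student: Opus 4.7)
The plan is to exhibit a very simple candidate co-lex order $\le'$ of $\mathcal{A}$ that already contains the pair $(u,v)$, and then invoke the maximality of $\le_{FS}$ (which exists for forward-stable NFAs by Becker et al.~\cite[Lemma 3]{tosoni2024CFS}) to conclude that $(u,v) \in \le_{FS}$.

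Concretely, I would define the relation $\le'$ on $Q$ by declaring $u \le' v$ if and only if either $u = v$ or $\lambda(u) < \lambda(v)$. First I would check that $\le'$ is a partial order: reflexivity is immediate by definition; antisymmetry follows because $u \le' v$ and $v \le' u$ with $u \neq v$ would imply $\lambda(u) < \lambda(v) < \lambda(u)$, a contradiction; transitivity follows by a short case analysis (either two of the three states coincide, in which case the conclusion is trivial, or one obtains $\lambda(u) < \lambda(v) < \lambda(w)$ and hence $\lambda(u) < \lambda(w)$).

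Next I would verify that $\le'$ satisfies both axioms of Definition~\ref{def:3:colex_order}. Axiom~1 is immediate from the definition of $\le'$. Axiom~2 is where the crucial use of assumption~$(v)$ in the preliminaries, the input-consistency of $\mathcal{A}$, comes in: if $u \in \delta_a(u')$ and $v \in \delta_a(v')$, then by input-consistency all edges reaching $u$ (resp.\ $v$) carry the label $a$, so $\lambda(u) = \lambda(v) = a$. But $u <' v$ would require $\lambda(u) < \lambda(v)$, which is now impossible, so the premise of Axiom~2 is vacuously false and the axiom holds trivially.

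Finally, since $\le_{FS}$ is the maximum co-lex order, i.e., the union of all co-lex orders of $\mathcal{A}$, we have $\le' \,\subseteq\, \le_{FS}$. In particular, from $\lambda(u) < \lambda(v)$ we obtain $u \le_{FS} v$, and since $\lambda(u) \neq \lambda(v)$ forces $u \neq v$, we conclude $u <_{FS} v$. I do not anticipate any real obstacle here: the entire argument reduces to constructing one trivial co-lex order and invoking maximality, and the only substantive ingredient is the observation that input-consistency renders Axiom~2 vacuous for $\le'$.
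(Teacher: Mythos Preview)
Your proof is correct and follows essentially the same approach as the paper: construct a simple co-lex order containing the pair $(u,v)$ and invoke the maximality of $\leq_{FS}$. The only cosmetic difference is that the paper uses the minimal relation $\{(z,z):z\in Q\}\cup\{(u,v)\}$, whereas you use the larger relation $\{(x,y):x=y\text{ or }\lambda(x)<\lambda(y)\}$; both are verified to be co-lex orders via the same input-consistency observation that makes Axiom~2 vacuous.
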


\begin{proof}
    Consider the partial order $\leq$ defined as follows; $\leq \coloneqq \{(z,z) : z \in Q \} \cup \{(u,v)\}$. $\leq$ is a co-lex order of $\mathcal{A}$, as it does not violate the axioms of Definition~\ref{def:3:colex_order}. Therefore, since $\leq_{FS}$ is defined as the union of every co-lex order of $\mathcal{A}$, it follows that $u <_{FS} v$ holds.
\end{proof}

% \vspace{-5pt}
\subparagraph{Preceding pairs.} 
We start with the definition of preceding pairs~\cite[Definition 6]{cotumaccio2022graphs}.
\begin{definition}[Preceding pairs]\label{def:4:prec_pairs}
    Let $\mathcal{A}$ be an NFA and let $(\bar{u},\bar{v}),(u,v) \in Q \times Q$ be  pairs of distinct states. We say that $(\bar{u},\bar{v})$ \emph{precedes} $(u,v)$, denoted by $(\bar{u},\bar{v}) \rightrightarrows (u,v)$, if there exist two walks, $P_{u} = (u_{i})_{i=1}^{l}$ and $P_{v} = (v_{i})_{i=1}^{l}$, such that (i)~$u_{l} = \bar{u}$ and $v_{l} = \bar{v}$, (ii)~for each $i \in [l]$, $u_{i} \neq v_{i}$, and (iii)~for each $i \in [l-1]$, $\lambda(u_{i}) = \lambda(v_{i}) = a$, for some $a \in \Sigma$.
\end{definition}

Note that if $u,v \in Q$ are distinct states, then the pair $(u,v)$ trivially precedes itself. The following observation directly follows from Definition~\ref{def:4:prec_pairs}.

% \vspace{-3pt}
\begin{observation}\label{trivial:obs:precpair}
    The relation $(\bar{u},\bar{v}) \rightrightarrows (u,v)$ is transitive. 
\end{observation}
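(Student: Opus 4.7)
The plan is to prove Observation~\ref{trivial:obs:precpair} by a direct concatenation argument on the witnessing walks. Concretely, I would show that if $(\bar u, \bar v) \rightrightarrows (u,v)$ and $(\bar{\bar u}, \bar{\bar v}) \rightrightarrows (\bar u, \bar v)$, then $(\bar{\bar u}, \bar{\bar v}) \rightrightarrows (u,v)$.

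Unfolding Definition~\ref{def:4:prec_pairs}, the first relation yields walks $P_u = (u_i)_{i=1}^{l}$ and $P_v = (v_i)_{i=1}^{l}$ with $u_1=u$, $v_1=v$, $u_l = \bar u$, $v_l = \bar v$, pairwise distinct at every index, and with $\lambda(u_i) = \lambda(v_i)$ for $i \in [l-1]$. The second relation yields walks $P_{\bar u} = (\bar u_j)_{j=1}^{l'}$ and $P_{\bar v} = (\bar v_j)_{j=1}^{l'}$ with $\bar u_1 = \bar u$, $\bar v_1 = \bar v$, $\bar u_{l'} = \bar{\bar u}$, $\bar v_{l'} = \bar{\bar v}$, again pairwise distinct and label-matching at positions $[l'-1]$. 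I would then form the concatenated sequences
\[
Q_u \;=\; (u_1,\ldots,u_l,\bar u_2,\ldots,\bar u_{l'}), \qquad Q_v \;=\; (v_1,\ldots,v_l,\bar v_2,\ldots,\bar v_{l'}),
\]
each of length $l+l'-1$, and verify that they jointly witness $(\bar{\bar u}, \bar{\bar v}) \rightrightarrows (u,v)$.

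Three things must be checked. First, $Q_u$ and $Q_v$ are walks to $u$ and $v$ in the sense of the paper: each half is a walk by assumption, and at the splice point $u_l = \bar u_1 = \bar u$ (resp.\ $\bar v$), so the transition condition at index $l$ is inherited from $P_{\bar u}$ (resp.\ $P_{\bar v}$). Second, distinctness $Q_u[i] \neq Q_v[i]$ at every index holds because both halves satisfy it and the splice indices coincide. Third, the label-matching condition $\lambda(Q_u[i]) = \lambda(Q_v[i])$ for $i \in [l+l'-2]$ holds on each half; the only index where this is not immediate is the junction $i=l$, where we need $\lambda(\bar u) = \lambda(\bar v)$. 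Assuming $l' \geq 2$, this follows by applying the label-matching property of the second pair of walks at $j=1$: $\lambda(\bar u_1) = \lambda(\bar v_1)$. The degenerate case $l'=1$ forces $\bar{\bar u} = \bar u$ and $\bar{\bar v} = \bar v$, so the conclusion reduces to the first hypothesis; symmetrically if $l=1$.

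There is no real obstacle here; the statement is a structural consequence of the fact that the definition is phrased in terms of existence of witnessing walks, and walks compose. The only subtlety worth flagging is the label condition at the splice point, which could appear problematic since the definition requires label-matching only up to index $l-1$ in each individual walk; I would make explicit that the label-matching of the second pair at $j=1$ is exactly what bridges this gap.
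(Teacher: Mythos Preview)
Your proposal is correct and is exactly the natural elaboration of what the paper intends: the paper gives no proof at all for this observation, simply stating that it ``directly follows from Definition~\ref{def:4:prec_pairs},'' and your concatenation argument is precisely that direct verification. Your attention to the label condition at the splice index (handled via condition~(iii) of the second witnessing pair at $j=1$, with the degenerate cases $l=1$ or $l'=1$ treated separately) is the only point requiring any care, and you address it cleanly.
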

% \vspace{-3pt}

Preceding pairs characterize the maximum co-lex order of a forward-stable NFA as follows.

% \vspace{-3pt}
\begin{restatable}[]{corollary}{cori}\label{lem:4:prec_pairs_colex}
    Let $\mathcal{A}$ be a forward-stable NFA and let $\leq_{FS}$ be its maximum co-lex order. Then  $u <_{FS} v$ holds for two distinct states $u,v$ in of $\mathcal{A}$ if and only if, for each pair $(\bar{u},\bar{v})$ with $(\bar{u},\bar{v}) \rightrightarrows (u,v)$, it holds that $\lambda(\bar{u}) \leq \lambda(\bar{v})$.
\end{restatable}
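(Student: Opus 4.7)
The plan is to prove the two implications separately.

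\textbf{Forward direction.} I will use the walks witnessing $(\bar u, \bar v) \rightrightarrows (u, v)$ and show inductively that corresponding states along the two walks are related by $<_{FS}$. Given walks $P_u = (u_i)_{i=1}^l$ and $P_v = (v_i)_{i=1}^l$ as in Definition~\ref{def:4:prec_pairs}, I claim that $u_i <_{FS} v_i$ for every $i \in [l]$. The base case $i = 1$ follows from $u_1 = u$, $v_1 = v$, and the hypothesis $u <_{FS} v$. For the inductive step, the common label $a = \lambda(u_i) = \lambda(v_i)$ and the relations $u_i \in \delta_a(u_{i+1})$, $v_i \in \delta_a(v_{i+1})$ provided by Definition~\ref{def:4:prec_pairs} let us invoke axiom~2 of Definition~\ref{def:3:colex_order} on $u_i <_{FS} v_i$, yielding $u_{i+1} \leq_{FS} v_{i+1}$; strictness is inherited from the pointwise distinctness $u_{i+1} \neq v_{i+1}$ built into the preceding-pair definition. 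Evaluating at $i = l$ and applying axiom~1 of Definition~\ref{def:3:colex_order} yields $\lambda(\bar u) \leq \lambda(\bar v)$.

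\textbf{Backward direction.} I will exhibit a co-lex order of $\mathcal{A}$ containing $(u, v)$ and appeal to the definition of $\leq_{FS}$ as the union of all co-lex orders. Given distinct $u, v$ such that every $(\bar u, \bar v) \rightrightarrows (u, v)$ satisfies $\lambda(\bar u) \leq \lambda(\bar v)$, consider the candidate $\leq^\star$ on $Q$ defined by $p \leq^\star q$ iff $p = q$, or $p \neq q$ and $(p, q) \rightrightarrows (u, v)$. Reflexivity is immediate. Axiom~1 of Definition~\ref{def:3:colex_order} holds since for any $p <^\star q$ the pair $(p, q)$ trivially precedes itself and hence, by Observation~\ref{trivial:obs:precpair}, precedes $(u, v)$, so the hypothesis forces $\lambda(p) \leq \lambda(q)$. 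Axiom~2 is verified by prepending a single backward step to the pair of walks witnessing $(p, q) \rightrightarrows (u, v)$: given $p \in \delta_a(p')$ and $q \in \delta_a(q')$, if $p' = q'$ we are done by reflexivity, and otherwise extending both walks by one state produces a valid witness for $(p', q') \rightrightarrows (u, v)$. Transitivity is inherited from Observation~\ref{trivial:obs:precpair}.

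\textbf{Main obstacle.} The principal difficulty I anticipate is antisymmetry of $\leq^\star$, which I expect to hinge essentially on the forward-stability of $\mathcal{A}$. If distinct $p, q$ satisfied both $(p, q) \rightrightarrows (u, v)$ and $(q, p) \rightrightarrows (u, v)$, then axiom~1 applied to $\leq^\star$ would force $\lambda(p) = \lambda(q)$, and combining the two pairs of witnessing walks should yield a non-singleton forward-stable grouping of states, contradicting the assumption that the coarsest forward-stable partition of $\mathcal{A}$ is the singleton partition. Once antisymmetry is established, $\leq^\star$ is a bona fide co-lex order containing $(u, v)$, and $u <_{FS} v$ follows from $u \neq v$ together with the maximality of $\leq_{FS}$.
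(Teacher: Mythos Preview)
Your forward direction is correct and is essentially the direct unfolding of the co-lex axioms along the two witnessing walks.

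The backward direction, however, has a concrete gap. You claim that transitivity of $\leq^\star$ is ``inherited from Observation~\ref{trivial:obs:precpair},'' but that observation states transitivity of $\rightrightarrows$ as a relation on \emph{pairs}: from $(a,b)\rightrightarrows(c,d)$ and $(c,d)\rightrightarrows(e,f)$ one gets $(a,b)\rightrightarrows(e,f)$. Transitivity of $\leq^\star$ requires something different: from $(p,q)\rightrightarrows(u,v)$ and $(q,r)\rightrightarrows(u,v)$ you would need $(p,r)\rightrightarrows(u,v)$. The two pairs of witnessing walks may have different lengths and visit different states, and there is no evident way to splice them into a single pair of equal-length walks from $u,v$ to $p,r$ that are pointwise distinct. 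So $\leq^\star$ as you define it is not obviously a partial order. Taking a transitive closure does not rescue the argument for free: you would then have to re-verify Axiom~1 and antisymmetry for the closure, and Axiom~1 in particular no longer follows directly from the hypothesis on preceding pairs of $(u,v)$.

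Your identified ``main obstacle,'' antisymmetry, is also more than a detail. The sketch (``combining the two pairs of witnessing walks should yield a non-singleton forward-stable grouping'') does not go through as stated: having $(p,q)\rightrightarrows(u,v)$ and $(q,p)\rightrightarrows(u,v)$ tells you something about walks \emph{into} $p,q$, whereas forward-stability constrains the \emph{outgoing} transition structure. Turning this into an actual violation of the singleton forward-stable partition requires a genuine argument (this is precisely the content of the cited \cite[Lemma~3]{tosoni2024CFS}).

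For comparison, the paper does not attempt a self-contained proof here. It invokes \cite[Lemma~7]{cotumaccio2022graphs}, which characterises the \emph{maximum co-lex relation} (a reflexive relation that need not be antisymmetric) exactly by the preceding-pair condition, and then \cite[Lemma~3]{tosoni2024CFS}, which shows that on a forward-stable NFA this relation coincides with the maximum co-lex order $\leq_{FS}$. Your forward direction reproduces half of the first lemma; the substance of the backward direction---both transitivity/antisymmetry of the candidate relation and the role of forward-stability---lives inside those two cited results and is not recovered by your sketch.
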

% \vspace{-3pt}

\begin{proof}
    Let $u$ and $v$ be two distinct states. Following~\cite[Lemma 7]{cotumaccio2022graphs}, $(u,v)$ is contained in the maximum co-lex relation (see~\cite[Definition 4]{cotumaccio2022graphs}) if and only if $\lambda(\bar{u}) \leq \lambda(\bar{v})$ for every pair $(\bar u, \bar v)$ with $(\bar{u},\bar{v}) \rightrightarrows (u,v)$. The maximum co-lex order $\leq_{FS}$ of a forward-stable NFA $\mathcal{A}$ is equal to its maximum co-lex relation according to~\cite[Lemma 3]{tosoni2024CFS}.
\end{proof}

Note that for any distinct states $u,v$ of $\mathcal{A}$, the statement $\neg(u <_{FS}v)$ holds if and only if there exists $(\bar{u},\bar{v}) \rightrightarrows (u,v)$ with $\lambda(\bar{u}) > \lambda(\bar{v})$. We now define \textit{co-lex extensions}.

% \vspace{-3pt}
\begin{definition}[Co-lex extension]\label{linear_ext}
    Let $\mathcal{A}$ be a forward-stable NFA and $Q$ its set of states. Consider the maximum co-lex order $\leq_{FS}$ of $\mathcal{A}$. Then, a total order $\leq$ on $Q$ is a \emph{co-lex extension} of $\mathcal{A}$, if $\leq_{FS}\; \subseteq\; \leq$.
\end{definition}
% \vspace{-3pt}

Due to Lemma 11 of the work of Becker et al.~\cite{becker2023sorting}, it follows that the \emph{ordered partition refinement}~\cite[Algorithm 1]{becker2023sorting} represents a feasible algorithm for computing a co-lex extension.

\begin{restatable}[]{lemma}{lemi}\label{lem:4:inv_prop}
    Consider a forward-stable NFA $\mathcal{A} = (Q,\delta,\Sigma,s)$. Moreover, let $\leq_{FS}$ and $\leq$ be the maximum co-lex order and a co-lex extension of $\mathcal{A}$, respectively. Then, for every $u,v \in Q$ such that $u < v$, there exists a pair $(\bar{u},\bar{v})$ with $(\bar{u},\bar{v}) \rightrightarrows (u,v)$ such that $\lambda(\bar{u}) < \lambda(\bar{v})$.
\end{restatable}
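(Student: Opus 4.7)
The plan is to prove the lemma as a direct consequence of Corollary~\ref{lem:4:prec_pairs_colex} (read via its contrapositive) combined with the symmetry of the preceding-pair relation in Definition~\ref{def:4:prec_pairs}. The underlying intuition is that $u < v$ in the extension should in particular preclude the existence of witnesses forcing $v <_{FS} u$; whatever asymmetry the extension creates must already be visible in some preceding pair with strict label inequality.

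First, I would argue that the hypothesis $u < v$ in the extension, together with $\leq_{FS}\,\subseteq\,\leq$ and the fact that the extension is a total order, immediately gives $u \neq v$ and $\neg(v <_{FS} u)$: otherwise $v <_{FS} u$ would lift through the inclusion to $v \leq u$ in the extension, contradicting $u < v$. Next, I would apply the ``only if'' direction of Corollary~\ref{lem:4:prec_pairs_colex}, in its contrapositive form, to the distinct pair $(v, u)$. From $\neg(v <_{FS} u)$ one then extracts a preceding pair $(\bar v, \bar u) \rightrightarrows (v, u)$ with $\lambda(\bar v) > \lambda(\bar u)$.

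Finally, I would exploit the fact that the preceding-pair relation is symmetric in its two components: if walks $P_v = (v_i)_{i=1}^{l}$ and $P_u = (u_i)_{i=1}^{l}$ witness $(\bar v, \bar u) \rightrightarrows (v, u)$ according to Definition~\ref{def:4:prec_pairs}, then the very same two walks, with roles exchanged, witness $(\bar u, \bar v) \rightrightarrows (u, v)$, because each of the three defining conditions of a preceding pair is symmetric in the two states. Thus the pair just produced is also a preceding pair $(\bar u, \bar v) \rightrightarrows (u, v)$, and it satisfies $\lambda(\bar u) < \lambda(\bar v)$, which is exactly what the lemma requires. I do not anticipate any real obstacle here: the argument is a short chain of logical equivalences, and the only content-bearing ingredient---that \emph{some} preceding pair of $(u,v)$ must exhibit a strict label discrepancy (in one direction or the other) whenever the states are not $\leq_{FS}$-comparable in the ``wrong'' way---is exactly what Corollary~\ref{lem:4:prec_pairs_colex} provides.
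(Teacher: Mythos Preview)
Your proposal is correct and follows essentially the same approach as the paper's proof: deduce $\neg(v <_{FS} u)$ from $u < v$ via Definition~\ref{linear_ext}, then apply the contrapositive of Corollary~\ref{lem:4:prec_pairs_colex} to the pair $(v,u)$. The only difference is that you spell out explicitly the symmetry of Definition~\ref{def:4:prec_pairs} needed to turn $(\bar v,\bar u)\rightrightarrows(v,u)$ into $(\bar u,\bar v)\rightrightarrows(u,v)$, whereas the paper absorbs this step silently into its one-line invocation of the corollary.
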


\begin{proof}
    According to Definition~\ref{linear_ext} $u < v$ implies $\neg(v <_{FS}u)$. Corollary~\ref{lem:4:prec_pairs_colex} then yields that there exists $(\bar{u}, \bar{v})$ with $(\bar{u},\bar{v}) \rightrightarrows (u,v)$ such that $\lambda(\bar{u}) < \lambda(\bar{v})$.
\end{proof}

% \section{Relation between the maximum co-lex order and infimum/supremum walks}\label{sec:4}
\section{Infimum and Supremum Walks}
\label{section: infimum supremum walks}

In this section we prove structural properties of infimum and supremum walks in a forward-stable NFA $\mathcal A$. At the end of the section we define left-minimal infimum and right-maximal supremum walks, the special type of infimum/supremum walks that our data structure is based on.
As before, we let $\leq_{FS}$ be the maximum co-lex order of $\mathcal{A}$, and $\leq$ be a co-lex extension of $\mathcal{A}$. We further denote by $n$ the number of states of $\mathcal{A}$.
The following lemma states that infimum and supremum can be compared using their first $2n - 1$ characters (see \cite{conte2023dcc, alanko2024cpm, cotumaccio2023spire} for similar results).

% \vspace{-3pt}
\begin{restatable}[]{lemma}{lemiv}\label{2nbound}
   Let $ \mathcal{A} $ be an NFA with set of states $ Q $, where $ |Q| = n $, and let $ u, v \in Q $. If $ \sup I_u \not = \inf I_v $, then $ \sup I_u [1, 2n - 1] \not = \inf {I_v} [1, 2n - 1] $.
\end{restatable}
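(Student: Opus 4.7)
The plan is to prove the contrapositive: if $\sup I_u[1, 2n-1] = \inf I_v[1, 2n-1]$, then $\sup I_u = \inf I_v$. Set $\alpha = \sup I_u$, $\beta = \inf I_v$, and let $\sigma$ denote the common length-$(2n-1)$ prefix. I will first invoke the standard sub-walk property of extremal walks: for any supremum walk $(u_i)_{i \ge 1}$ to $u$ and any $j \ge 1$, the tail $(u_i)_{i \ge j}$ is itself a supremum walk to $u_j$ (otherwise one could splice a lexicographically larger tail onto the prefix and contradict the supremum property); the analogous statement holds for infimum walks. Fix such a supremum walk to $u$. By pigeonhole on the $n+1$ states $u_1, \dots, u_{n+1}$ in an $n$-state automaton, there exist indices $1 \le a < b \le n+1$ with $u_a = u_b$. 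Combining the sub-walk property with the uniqueness of $\sup I_{u_a}$ forces $\alpha_i = \alpha_{i + p_1}$ for all $i \ge a$, where $p_1 := b - a$; the same reasoning applied to an infimum walk $(v_i)_{i \ge 1}$ yields indices $c$ and $p_2$ such that $\beta_i = \beta_{i + p_2}$ for all $i \ge c$. Crucially we obtain the sharp bounds $a + p_1 - 1 \le n$ and $c + p_2 - 1 \le n$.

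Set $m := \max(a, c) \le n$. Both $\alpha$ and $\beta$ admit periods $p_1$ and $p_2$, respectively, starting from position $m$, and the finite string $\sigma[m, 2n-1]$ of length $2n - m \ge n$ carries both periods. To apply the Fine--Wilf theorem I verify $p_1 + p_2 - \gcd(p_1, p_2) \le 2n - m$: using $p_1 \le n - a + 1$ and $p_2 \le n - c + 1$,
\[
p_1 + p_2 - \gcd(p_1, p_2) \;\le\; p_1 + p_2 - 1 \;\le\; 2n - a - c + 1 \;\le\; 2n - m,
\]
where the last inequality holds because $m = \max(a, c) \le a + c - 1$ (using $\min(a, c) \ge 1$). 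Fine--Wilf thus implies that $\sigma[m, 2n-1]$ has period $d := \gcd(p_1, p_2)$.

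Finally, since $p_1 \le 2n - m$, the block $\sigma[m, m + p_1 - 1]$ lies inside $\sigma[m, 2n-1]$ and inherits the period $d$, so it equals $(\sigma[m, m + d - 1])^{p_1/d}$. As $\alpha$ has period $p_1$ from position $m$, its suffix from $m$ onwards is the infinite repetition $(\sigma[m, m + p_1 - 1])^\omega = (\sigma[m, m + d - 1])^\omega$; the same argument with $p_2$ in place of $p_1$ gives the identical infinite string for the suffix of $\beta$ from $m$. Combined with $\alpha[1, m - 1] = \sigma[1, m - 1] = \beta[1, m - 1]$, this yields $\alpha = \beta$, the desired contradiction. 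The main delicate point is invoking Fine--Wilf with the correct slack: using only $p_1, p_2 \le n$ would be insufficient, and one must exploit the sharper inequality $a + p_1 - 1 \le n$ (and its counterpart for $\beta$) that the pigeonhole actually provides, together with the fact that the overlap region $\sigma[m, 2n-1]$ still has length at least $n$ even after shifting to position $m$.
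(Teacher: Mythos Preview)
Your proof is correct, and it follows a genuinely different route from the paper's. The paper considers the set $C$ of all strings $\sup I_w$ and $\inf I_w$ (at most $2n$ of them), sorts them, and defines the associated $\mathsf{LCP}$ array. Using the fact that every $\gamma \in C$ factors as $c\,\gamma'$ with $\gamma' \in C$, it shows by a ``peel one character'' descent that if some $\mathsf{LCP}$ entry equals $d \ge 1$ then some other entry equals $d-1$; hence the entries take at least $d+1$ distinct values, forcing $d \le |C| - 2 \le 2n - 2$.

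Your argument instead works with just the two strings $\alpha = \sup I_u$ and $\beta = \inf I_v$. Via pigeonhole on the first $n+1$ states of a supremum (resp.\ infimum) walk together with the sub-walk property, you extract eventual periods $p_1, p_2$ starting no later than positions $a, c$ with the sharp constraints $a + p_1 \le n+1$ and $c + p_2 \le n+1$; these are exactly what is needed to make the Fine--Wilf inequality go through on the overlap $\sigma[m, 2n-1]$. The paper's LCP descent is more self-contained (no external periodicity lemma) and yields the bound uniformly for all pairs of extremal strings at once, which ties in with the graph-LCP literature it cites. Your approach is more direct for the specific pair at hand and makes the role of the $2n-1$ threshold transparent as a Fine--Wilf length condition; it also avoids introducing the global set $C$.
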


\begin{proof}    
    Let $ C $ be the set consisting of all strings $ \sup I_u $'s and all strings $ \inf I_u $'s, and write $ C = \{\gamma_1, \gamma_2, \dots, \gamma_{n'} \} $ where $ \gamma_1 < \gamma_2 < \dots < \gamma_{n'} $. Then, $ n' \le 2n $. Notice that for every $ u \in Q $, by the maximality of $ \sup I_u $ there exist $ c_1 \in \Sigma $ and $ u_1 \in Q $ such that $ \sup I_u = c_1 \sup I_{u_1} $, and by the minimality of $ \inf I_v $ there exist $ c_2 \in \Sigma $ and $ u_2 \in Q $ such that $ \inf I_u = c_2 \inf I_{u_2} $. This implies that for every $ 2 \le i \le n' $ there exist $ c \in \Sigma $ and $ 2 \le i' \le n' $ such that $ \gamma_i = c \gamma_{i'} $. We define the array $ \mathsf{LCP}[2, n'] $ such that $ \mathsf{LCP}[i] = \mathsf{lcp}(\gamma_{i - 1}, \gamma_i) $, where $ \mathsf{lcp}(\alpha, \beta) $ is the length of the longest common prefix between $ \alpha $ and $ \beta $. To prove the lemma it is sufficient to show that $ \mathsf{LCP}[i] \le n' - 2 $ for every $i$ with $ 2 \le i \le n' $. To prove this it is sufficient to show that if $\mathsf{LCP}[i] = d$ for some $d \geq 1$ and $i$ with $2 \leq i \leq n'$, then there exists $j$ with $ 2 \le j \le n' $ such that $ \mathsf{LCP}[j] = d - 1$, because then we obtain that $ \mathsf{LCP} $ has at least $ d + 1 $ distinct entries, so $ d + 1 \le n' - 1 $, which implies $ d \le n' - 2 $. Assume that $ \mathsf{LCP}[i] = d $, with $ d \ge 1 $.  Let $ c_1, c_2 \in \Sigma $ and $ 2 \le i', i'' \le n' $ be such that $ \gamma_i = c_1 \gamma_{i'} $ and $ \gamma_{i - 1} = c_2 \gamma_{i''} $. Since $ 1 \le d = \mathsf{LCP}[i] = \mathsf{lcp}(\gamma_{i - 1}, \gamma_i) = \mathsf{lcp}(c_1 \gamma_{i'}, c_2 \gamma_{i''}) $, we have $ c_1 = c_2 $ and from $ \gamma_{i - 1} < \gamma_i $ we obtain $ \gamma_{i''} < \gamma_{i'} $, which implies $ i'' < i' $. Then, $ \mathsf{LCP}[i] = 1 + \mathsf{lcp}(\gamma_{i'},  \gamma_{i''}) = 1 + \min_{i' + 1 \le j \le i''} \mathsf{lcp}(\gamma_{j - 1},  \gamma_{j}) = 1 + \min_{i' + 1 \le j \le i''} \mathsf{LCP}[j] $, so there exists $ i' + 1 \le j \le i'' $ such that $ d = \mathsf{LCP}[i] = 1 + \mathsf{LCP}[j] $, which implies $ \mathsf{LCP}[j] = d - 1 $.
\end{proof}
% \vspace{-3pt}

%because if $ \sup I_u = \gamma_k $ and $ \inf I_v = \gamma_{k'} $, where without loss of generality $ k < k' $, then $ \mathsf{lcp}(\sup I_u, \inf I_v) = \mathsf{lcp}(\gamma_k, \gamma_{k'}) = \min_{k + 1 \le i \le k'} \mathsf{lcp}(\gamma_{i - 1}, \gamma_i) = \min_{k + 1 \le i \le k'} \mathsf{LCP}[i] \le n' - 2 \le 2n - 2 $ and so $ \sup I_u [1, 2n - 1] \not = \inf {I_v} [1, 2n - 1] $.

% We will from now on fix a forward-stable NFA $\mathcal A$, its maximum co-lex order $\le_{FS}$, and a co-lex extension $\le$. Furthermore, we will denote by $n$ the number of states in $\mathcal{A}$.
The following lemma treats the case in which the relation between the supremum and infimum of two states already implies their respective order with respect to $\le_{FS}$.
% \vspace{-3pt}
\begin{restatable}[]{lemma}{lemii}\label{lem:5:sup_inf_co_lex}
    Let $u, v$ be two distinct states. Then, $\sup I_{u} \leq \inf I_{v}$ implies $u <_{FS}v$.
\end{restatable}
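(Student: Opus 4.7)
The plan is to apply Corollary~\ref{lem:4:prec_pairs_colex}: it suffices to show that for every pair $(\bar u, \bar v)$ with $(\bar u, \bar v) \rightrightarrows (u,v)$ one has $\lambda(\bar u) \le \lambda(\bar v)$. So I would fix such a preceding pair with witnessing walks $P_u = (u_i)_{i=1}^l$ and $P_v = (v_i)_{i=1}^l$, where $u_l = \bar u$, $v_l = \bar v$, and $\lambda(u_i) = \lambda(v_i) =: a_i$ for $i \in [l-1]$, and then derive $\lambda(\bar u) \le \lambda(\bar v)$ from the hypothesis $\sup I_u \le \inf I_v$.

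\textbf{Key steps.} First, I would extend the walk $P_u$ by appending a supremum walk starting at $\bar u$, which exists by the discussion after Definition~\ref{def:2:inf_sup_str}; this yields an infinite walk to $u$ whose label is the string $a_1 a_2 \cdots a_{l-1}\, \sup I_{\bar u}$, so this string belongs to $I_u$. By definition of $\sup I_u$ we obtain
\[
a_1 a_2 \cdots a_{l-1}\, \sup I_{\bar u} \;\le\; \sup I_u.
\]
Analogously, extending $P_v$ by an infimum walk starting at $\bar v$ gives $\inf I_v \le a_1 a_2 \cdots a_{l-1}\, \inf I_{\bar v}$. Combining these with the hypothesis $\sup I_u \le \inf I_v$ yields
\[
a_1 a_2 \cdots a_{l-1}\, \sup I_{\bar u} \;\le\; a_1 a_2 \cdots a_{l-1}\, \inf I_{\bar v},
\]
and since both strings share the same length-$(l-1)$ prefix $a_1 \cdots a_{l-1}$, we can cancel it to deduce $\sup I_{\bar u} \le \inf I_{\bar v}$. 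Because $\sup I_{\bar u}$ begins with $\lambda(\bar u)$ and $\inf I_{\bar v}$ begins with $\lambda(\bar v)$, a direct comparison of the first character forces $\lambda(\bar u) \le \lambda(\bar v)$, as desired. The degenerate case $l = 1$ (where the preceding pair is just $(u,v)$ itself) works out of the same argument, since then the shared prefix is empty and the hypothesis already gives $\lambda(u) \le \lambda(v)$ directly.

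\textbf{Main obstacle.} There is no deep obstacle; the only subtlety is to be careful with the lexicographic comparison. In particular, one must not assume that $\sup I_u$ itself starts with the prefix $a_1 \cdots a_{l-1}$ (it may very well start with a lexicographically larger prefix); instead the argument goes through the inequality $a_1 \cdots a_{l-1} \sup I_{\bar u} \le \sup I_u$ and then uses transitivity with the hypothesis to recover a comparison between two strings that \emph{do} share the prefix $a_1 \cdots a_{l-1}$, after which cancellation becomes legitimate. Once this is handled cleanly, the conclusion $u <_{FS} v$ follows from Corollary~\ref{lem:4:prec_pairs_colex} together with the fact that $u \ne v$.
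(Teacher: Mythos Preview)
Your proof is correct and follows essentially the same approach as the paper: both use the preceding-pair characterization (Corollary~\ref{lem:4:prec_pairs_colex}) and the fact that prepending a common finite prefix preserves lexicographic comparisons of infinite strings. The only cosmetic difference is that the paper argues by contrapositive (from $\neg(u<_{FS}v)$ it exhibits strings witnessing $\sup I_u > \inf I_v$), whereas you argue directly; the paper also uses arbitrary strings in $I_{\bar u}$ and $I_{\bar v}$ rather than specifically $\sup I_{\bar u}$ and $\inf I_{\bar v}$, but this does not affect the argument.
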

% \vspace{-3pt}

\begin{proof}
     We show the contrapositive. Assume that $\neg(u <_{FS} v)$. By Corollary~\ref{lem:4:prec_pairs_colex}, this implies the existence of a pair $(\bar{u},\bar{v})$ with $(\bar{u},\bar{v})\rightrightarrows(u,v)$ such that $\lambda(\bar{u})> \lambda(\bar{v})$. 
     %Let us now consider the walks $P_{u} = (u_{i})_{i=1}^{l}$ and $P_{v} = (v_{i})_{i=1}^{l}$ for which $(\bar{u},\bar{v})\rightrightarrows(u,v)$. Therefore, $u_1 = u, v_1 = v, u_l = \bar{u}, v_l = \bar{v}$. Furthermore, consider the strings $\alpha' = \lambda(u_{1})\lambda(u_{2})\ldots \lambda(u_{l}), \beta' = \lambda(v_{1})\lambda(v_{2})\ldots \lambda(v_{l})$. By Definition~\ref{def:4:prec_pairs}, we know that $\alpha'[1,l-1] = \beta'[1,l-1]$. From $\lambda(\bar{u})> \lambda(\bar{v})$, we conclude that $\alpha' > \beta'$.
     By Definition~\ref{def:4:prec_pairs}, this implies that there exist two walks $P_{u} = (u_{i})_{i=1}^{l}$ and $P_{v} = (v_{i})_{i=1}^{l}$ with $u_1 = u, v_1 = v, u_l = \bar{u}, v_l = \bar{v}$ such that $\alpha'= \beta'$ where $\alpha' := \lambda(u_{1})\lambda(u_{2})\ldots \lambda(u_{l-1})$ and $\beta' := \lambda(v_{1})\lambda(v_{2})\ldots \lambda(v_{l-1})$.
     Now recall that every state is reachable from the initial state $s$ and $s \in \delta(s,\#)$. Thus there exist infinite strings $\alpha'' \in I_{\bar{u}}$ and $\beta'' \in I_{\bar{v}}$. Note that $\alpha''>\beta''$ because $\lambda(\bar{u})>\lambda(\bar{v})$. Since $\alpha = \alpha'\alpha''\in I_u$ and $\beta = \beta'\beta''\in I_v$, 
     %and furthermore $\alpha' > \beta'$ and $\beta'$ is not a prefix of $\alpha'$ implies $\alpha > \beta$. 
     we obtain $\sup I_{u} \geq \alpha = \alpha'\alpha'' = \beta'\alpha'' > \beta'\beta'' = \beta \geq \inf I_{v}$, completing the proof.
\end{proof}

With the next lemma we show a case in which we can determine if $\neg(u <_{FS} v)$ holds.

% \vspace{-3pt}
\begin{restatable}[]{lemma}{lemv}\label{lem:5:sup_inf_vs_walks}
    Let $u, v$ be two states with $u < v$ and $\sup I_{u} > \inf I_{v}$. Furthermore, let $P_{u}^{\sup} = (u_{i})_{i\geq1}$ and $P_{v}^{\inf} = (v_{i})_{i\geq1}$. Then there exists an integer $j$ with $1 < j < 2n$ such that $v_{j} < u_{j}$ and  $u_{i} \leq v_{i}$ as well as $\lambda(u_{i}) = \lambda(v_{i})$ for each $i \in [j-1]$. In addition, if $u_{i} < v_{i}$ holds for all $i\in [j - 1]$, then $\neg(u <_{FS} v)$.
\end{restatable}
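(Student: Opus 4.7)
The plan is to handle the two claims in sequence. For the first claim, I will define $j$ to be the smallest positive integer such that $v_j < u_j$ in the co-lex extension $\leq$. The inequality $u_i \le v_i$ for $i \in [j-1]$ then follows immediately from the totality of $\leq$ and the minimality of $j$. I would then show $\lambda(u_i) = \lambda(v_i)$ for $i \in [j-1]$ by contradiction: if $i'$ is the smallest index less than $j$ with $\lambda(u_{i'}) \neq \lambda(v_{i'})$, then $\sup I_u$ and $\inf I_v$ agree on positions $[i'-1]$, so $\sup I_u > \inf I_v$ forces $\lambda(u_{i'}) > \lambda(v_{i'})$; Observation~\ref{trv_obs} then gives $v_{i'} <_{FS} u_{i'}$ and hence $v_{i'} < u_{i'}$ in $\leq$, contradicting the minimality of $j$.

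It remains to argue $1 < j < 2n$. For the lower bound, at $i=1$ we have $u_1 = u < v = v_1$ by hypothesis, so $v_1 \not< u_1$ and $j \neq 1$. For the upper bound, if no such $j \leq 2n-1$ existed, the minimality argument above would yield $\lambda(u_i) = \lambda(v_i)$ for all $i \in [2n-1]$, making $\sup I_u$ and $\inf I_v$ coincide on their first $2n-1$ characters; Lemma~\ref{2nbound} would then force $\sup I_u = \inf I_v$, contradicting $\sup I_u > \inf I_v$. Hence $j \le 2n-1$, as required.

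For the second claim I will use the preceding-pair machinery. Under the assumption $u_i < v_i$ strictly for every $i \in [j-1]$ together with $v_j < u_j$, the states $u_i$ and $v_i$ are distinct at every position $i \in [j]$, and by the first part $\lambda(u_i) = \lambda(v_i)$ for $i \in [j-1]$. Thus the two prefix walks $(u_i)_{i=1}^{j}$ and $(v_i)_{i=1}^{j}$ witness $(u_j, v_j) \rightrightarrows (u, v)$. Now Lemma~\ref{lem:4:inv_prop} applied to $v_j < u_j$ produces a pair $(\bar u, \bar v)$ with $(\bar u, \bar v) \rightrightarrows (v_j, u_j)$ and $\lambda(\bar u) < \lambda(\bar v)$; swapping the two components in the witnessing walks converts this into $(\bar v, \bar u) \rightrightarrows (u_j, v_j)$ with $\lambda(\bar v) > \lambda(\bar u)$. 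Chaining with $(u_j, v_j) \rightrightarrows (u, v)$ via transitivity (Observation~\ref{trivial:obs:precpair}) yields $(\bar v, \bar u) \rightrightarrows (u, v)$ with $\lambda(\bar v) > \lambda(\bar u)$, and the contrapositive of Corollary~\ref{lem:4:prec_pairs_colex} then delivers $\neg (u <_{FS} v)$.

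The main subtlety is ensuring that a single definition of $j$ simultaneously controls label equality, the inequality $u_i \leq v_i$ in the extension, and the bound $j < 2n$. Choosing $j$ as the minimal index where $v_j < u_j$ in $\leq$ packages all three together by combining totality of $\leq$, Observation~\ref{trv_obs}, and Lemma~\ref{2nbound}; the rest of the argument is a straightforward application of the preceding-pair calculus.
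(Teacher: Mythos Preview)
Your proposal is correct and follows essentially the same approach as the paper: you define $j$ as the minimal index with $v_j<u_j$, derive label equality on $[j-1]$ from Observation~\ref{trv_obs}, bound $j<2n$ via Lemma~\ref{2nbound}, and for the second claim establish $(u_j,v_j)\rightrightarrows(u,v)$, invoke Lemma~\ref{lem:4:inv_prop} (with the component swap you spell out explicitly), and conclude by transitivity and Corollary~\ref{lem:4:prec_pairs_colex}. The paper's proof is organized slightly differently---it first locates the position $k$ where the labels of $\sup I_u$ and $\inf I_v$ diverge and then observes $j\le k$---but the underlying argument is the same.
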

% \vspace{-3pt}

\begin{proof}
    Consider $\alpha = \sup I_{u}$ and $\beta = \inf I_{v}$. Let $k$ be the maximal integer such that $\alpha[1,k-1] = \beta[1,k-1]$. Since $\alpha > \beta$, it holds that $\lambda(u_{k}) > \lambda(v_{k})$. By Lemma~\ref{2nbound}, $k < 2n$. Moreover, by Observation~\ref{trv_obs}, $\lambda(u_{k}) > \lambda(v_{k})$ implies $v_{k} <_{FS} u_{k}$ which, by Definition~\ref{linear_ext}, in turn implies $v_{k} < u_{k}$. Therefore, if we define $j>1$ as the smallest integer for which $v_{j} < u_{j}$, then clearly for each $i \in [j-1]$, $u_{i} \leq v_{i}$. Moreover, since $j \leq k$, and $\alpha[1,k-1] = \beta[1,k-1]$, for each $i \in [j-1]$, it holds that $\lambda(u_{i}) = \lambda(v_{i})$. For the second part of the lemma, if $u_{i} < v_{i}$ (implying $u_{i}\ne v_{i}$) for each $i \in [j-1]$, then $(u_{j},v_{j})\rightrightarrows(u,v)$ by definition. Since $v_{j} < u_{j}$, by Lemma~\ref{lem:4:inv_prop}, there exists $(\bar{u},\bar{v})\rightrightarrows(u_{j},v_{j})$ with $\lambda(\bar{u}) > \lambda(\bar{v})$. Thus, by Observation~\ref{trivial:obs:precpair} it holds that $(\bar{u},\bar{v})\rightrightarrows(u,v)$ which by Corollary~\ref{lem:4:prec_pairs_colex} proves that $\neg(u <_{FS} v)$.
\end{proof}

We summarize what we achieved show so far. Given two distinct states $u,v$, to understand whether or not $u <_{FS} v$ holds we can proceed as follows. If according to a co-lex extension $\leq$, the statement $v < u$ holds, by Definition~\ref{linear_ext}, we know $\neg(u <_{FS} v)$. If $u < v$, then if $\sup I_{u} \leq \inf I_{v}$, by Lemma~\ref{lem:5:sup_inf_co_lex}, we know that $u <_{FS} v$. Otherwise, consider a supremum walk to $u$, $(u_{i})_{i\geq1}$, and an infimum walk to $v$, $(v_{i})_{i\geq1}$. If $\sup I_{u} > \inf I_{v}$, and there exists $j$ such that $v_{j} < u_{j}$ and  $u_{i} < v_{i}$ for each $i\in [j-1]$, then by Lemma~\ref{lem:5:sup_inf_vs_walks} $\neg(u <_{FS} v)$. See Figure~\ref{fig:6:enc_ex} for an example. The only remaining case to address is the existence of an integer $j'\in [j-1]$ such that $u_{j'} = v_{j'}$. 
To study this case, we introduce left-minimal (right-maximal) walks. 

% \vspace{-5pt}
\subparagraph{Left-minimal/right-maximal walks.}
We proceed with the definition of \emph{left-minimal infimum walks} and \emph{right-maximal supremum walks}. We first define infimum and supremum graphs.

\begin{definition}[Infimum and supremum graphs]\label{def:6:inf_sup_nfa}
     The \emph{infimum (supremum) graph} $G=(Q,E)$ of $\mathcal{A}$ is the directed unlabeled graph defined as follows: (i)~The node set of $G$ is identical to the one of $\mathcal A$. (ii)~For each $u,v \in Q$, we let $(u,v) \in E$ if and only if there exists an infimum (supremum) walk $(u_{i})_{i\geq 1}$ to a state $u' \in Q$ and an integer $j$ such that $u_{j+1} = u$ and $u_{j} = v$.
\end{definition}
We can now prove the following observation.
\begin{restatable}[]{observation}{obsii}\label{obs:6:walk_inf_sup_nfa}
    Let $\mathcal{A}$ be an NFA and let $G$ be its infimum (supremum) graph. Then, every walk of infinite length in $G$ is an infimum (supremum) walk in $\mathcal{A}$.
\end{restatable}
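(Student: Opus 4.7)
The plan is to treat the infimum case explicitly; the supremum case is completely symmetric. Let $W = (w_i)_{i \geq 1}$ be an infinite walk in $G$, meaning $(w_{i+1}, w_i) \in E$ for every $i \geq 1$, in accordance with the edge direction chosen in Definition~\ref{def:6:inf_sup_nfa}. The first (easy) step is to observe that $W$ is already a valid walk to $w_1$ in $\mathcal{A}$: by Definition~\ref{def:6:inf_sup_nfa}, each edge $(w_{i+1}, w_i) \in E$ is witnessed by some infimum walk of $\mathcal{A}$ that traverses the NFA-transition $w_{i+1} \to w_i$; by input-consistency this transition carries the label $\lambda(w_i)$, so $w_i \in \delta_{\lambda(w_i)}(w_{i+1})$ holds for every $i$, which is exactly the walk condition from the preliminaries.

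The only genuinely non-trivial ingredient is a \emph{suffix property} for infimum walks, which I would state and prove as a small sub-lemma: if $(u_k)_{k \geq 1}$ is any infimum walk to $u_1$, then for every $j \geq 1$ the tail $(u_{j+k-1})_{k \geq 1}$ is itself an infimum walk to $u_j$. The argument is a clean splice: if the tail's label $\lambda(u_j)\lambda(u_{j+1})\cdots$ differed from $\inf I_{u_j}$ it would be strictly larger, so choosing any walk $(v_k)_{k \geq 1}$ to $v_1 = u_j$ that realizes $\inf I_{u_j}$ and forming $(u_1, \ldots, u_{j-1}, v_1, v_2, \ldots)$ yields a valid walk to $u_1$ (the junction is legal because $v_1 = u_j$) whose label is strictly smaller than $\inf I_{u_1}$, a contradiction. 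The analogous statement for supremum walks follows by reversing the inequality.

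With the suffix property in hand the rest is a short induction. For each $i$, the edge $(w_{i+1}, w_i) \in E$ is, by Definition~\ref{def:6:inf_sup_nfa}, witnessed by some infimum walk $(u^{(i)}_k)_{k \geq 1}$ with $u^{(i)}_{p+1} = w_{i+1}$ and $u^{(i)}_p = w_i$ for some $p$; applying the suffix property at position $p$ gives an infimum walk to $w_i$ beginning with the two states $w_i, w_{i+1}$, and applying it once more at position $2$ of this walk shows that its tail is an infimum walk to $w_{i+1}$. Consequently
\[
    \inf I_{w_i} \;=\; \lambda(w_i)\cdot \inf I_{w_{i+1}} \qquad \text{for every } i \geq 1.
\]
Iterating this identity starting from $i = 1$ yields $\inf I_{w_1}[1, j] = \lambda(w_1)\lambda(w_2)\cdots \lambda(w_j)$ for every $j$, so $\inf I_{w_1}$ coincides with the label string $\lambda(w_1)\lambda(w_2)\lambda(w_3)\cdots$ of $W$, and $W$ is therefore an infimum walk to $w_1$. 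The only place that requires real work is the suffix sub-lemma; the rest of the argument is just unpacking Definition~\ref{def:6:inf_sup_nfa} and iterating the resulting recursion.
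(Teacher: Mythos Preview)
Your proof is correct. The suffix property is sound (the splice argument you give is exactly right, using input-consistency implicitly to ensure the junction is a legal transition), and the recursion $\inf I_{w_i} = \lambda(w_i)\cdot \inf I_{w_{i+1}}$ follows cleanly from two applications of it; iterating then pins down every finite prefix of $\inf I_{w_1}$.

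Your route is genuinely different from the paper's. The paper argues by contradiction: assuming $W$ is not an infimum walk, it takes the \emph{largest} $j$ for which some infimum walk $(\bar u_i)_{i\ge 1}$ to $u$ agrees with $W$ on $[j]$, then uses the witnessing infimum walk for the edge $(w_{j+1},w_j)$ in $G$ to perform a \emph{symmetric} splice between the two infimum walks at the common state $w_j=\bar u_j$. The two resulting inequalities force the tail labels to be equal, producing a new infimum walk that agrees with $W$ on $[j{+}1]$, contradicting maximality of $j$. So the paper's splicing happens once, between two infimum walks, and exploits both directions of the inequality; your splicing happens inside the suffix sub-lemma, between an infimum walk and an arbitrary realizing walk, and uses only one inequality. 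What you gain is modularity: the suffix property is a clean reusable statement, and the rest of the argument is just unrolling a recursion. What the paper's approach gains is that it avoids stating an auxiliary lemma and goes straight for the contradiction, at the cost of a slightly more delicate ``largest $j$'' setup (which is well-defined precisely because agreement on every prefix would already make $W$ an infimum walk).
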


\begin{proof}
    Let $u$ be a state of $\mathcal{A}$, and let $P_{u} = (u_{i})_{i \geq 1}$ be an arbitrary walk of infinite length. We prove this result for infimum walks, the proof for the supremum walks is analogous. Suppose for the sake of a contradiction that $P_{u}$ is not an infimum walk. We consider the largest integer $j$ for which there exists an infimum walk $P_{u}^{\inf} = (\bar{u}_{i})_{i\geq1}$, such that, for each $i\in [j]$, $\bar{u}_{i} = u_{i}$. Thus, $\bar{u}_{j+1} \neq u_{j+1}$. By Definition~\ref{def:6:inf_sup_nfa}, there exists a state $v$, an infimum walk $P_{v}^{\inf} = (v_{i})_{i\geq1}$, and an integer $k$ for which $v_{k} = u_{j}$ and $v_{k+1} = u_{j+1}$. Consider $\alpha = \lambda(\bar{u}_{1}) \ldots \lambda(\bar{u}_{j})$, $\gamma = \lambda(\bar{u}_{j+1})\lambda(\bar{u}_{j+2})\ldots\,$, $\beta = \lambda(v_{1}) \ldots \lambda(v_{k})$, $\gamma' =  \lambda(v_{k+1})\lambda(v_{k+2})\ldots\,$, we know $\inf I_{u} = \alpha\gamma$, $\inf I_{v} = \beta\gamma'$. Moreover, consider the walks $P_{u}' = (\hat{u}_{i})_{i\geq1}$, where for each $i \in [j]$, $\hat{u}_{i} = \bar{u}_{i}$, and for each $i \geq j$, $\hat{u}_{i} = v_{i+k-j}$, and $P_{v}' = (\hat{v}_{i})_{i\geq1}$, where, for each $ i \in [k]$, $\hat{v}_{i} = v_{i}$, and, for each $i \geq k$, $\hat{v}_{i} = \bar{u}_{i+j-k}$. We know that $\alpha\gamma' \in I_{u}$ and $\beta\gamma \in I_{v}$. By Definition~\ref{def:3:inf_sup_walks}, $\alpha\gamma \leq \alpha\gamma'$ which implies $\gamma \leq \gamma'$, and $\beta\gamma' \leq \beta\gamma$ which implies $\gamma' \leq \gamma$. Thus $\gamma = \gamma'$. Consequently, $P_{u}'$ is an infimum walk. However, $\hat{u}_{j} = u_{j}$ and $\hat{u}_{j+1} = u_{j+1}$, a contradiction. 
\end{proof}

We now introduce leftmost/rightmost walks in directed (unlabeled) graphs. 

\begin{definition}[Leftmost/rightmost walk]\label{def:leftmost}
     Let $ G = (V, E) $ be a directed graph, let $\leq$ be a total order on $V$ and let $ u \in V $. A walk $P_{u}= (u_{i})_{i \ge 1}$ is a \emph{leftmost (rightmost)} walk to $u$, if for each walk $\bar{P}_u=(\bar{u}_i)_{i \geq 1} $ and integer $j > 1$, $u_j = \bar{u}_j$ implies $u_{j - 1} \leq \bar{u}_{j - 1}$ $(\bar{u}_{j - 1} \leq u_{j - 1})$.
\end{definition}
We are now ready to introduce left-minimal and right-maximal walks.

\begin{definition}[Left-minimal and right-maximal walks]\label{def:6:left_min_rig_max}
    Let $u$ be a state of $\mathcal{A}$. We say that an infimum (supremum) walk $P_{u}= (u_{i})_{i\geq 1}$ is \emph{left-minimal (right-maximal)} if $P_{u}$ is a leftmost (rightmost) walk to $u$ in the infimum (supremum) graph of $\mathcal{A}$ according to $\leq$.
\end{definition}

The proof of existences of left-minimal and right-maximal walks in directed graphs is given in Theorem~\ref{exleft} in Section~\ref{section: leftmost walk existence}. The next corollary then follows together with Observation~\ref{obs:6:walk_inf_sup_nfa}.

% \vspace{-3pt}
\begin{corollary}\label{ex:leftminimal}
    Let $Q$ be the states of $\mathcal{A}$. There exists $p : Q \rightarrow Q$, such that, for each $u \in Q$, the sequence $(p^{i}(u))_{i \geq 0}$ is a left-minimal infimum (right-maximal supremum) walk to $u$.
\end{corollary}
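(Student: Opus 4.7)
The plan is to derive the corollary by a direct combination of Theorem~\ref{exleft} (existence of leftmost/rightmost walks in directed graphs, representable by a single predecessor function) with Observation~\ref{obs:6:walk_inf_sup_nfa} (which identifies infinite walks in the infimum/supremum graph with infimum/supremum walks in $\mathcal{A}$).

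First, I would set up the object to which Theorem~\ref{exleft} is applied. Let $G = (Q, E)$ be the infimum graph of $\mathcal{A}$ (the argument for the supremum graph is symmetric), and restrict the co-lex extension $\leq$ to $Q$ so that we have a total order on the vertex set of $G$. Then Theorem~\ref{exleft} yields a function $p : Q \to Q$ such that, for every $u \in Q$, the sequence $(p^{i}(u))_{i \geq 0}$ is a walk in $G$ that is leftmost (in the sense of Definition~\ref{def:leftmost}) with respect to $\leq$.

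Second, I would translate this back into the language of $\mathcal{A}$. Since $(p^{i}(u))_{i \geq 0}$ is a walk of infinite length in the infimum graph $G$, Observation~\ref{obs:6:walk_inf_sup_nfa} guarantees that it is an infimum walk in $\mathcal{A}$. Combined with the leftmost property from the previous step, Definition~\ref{def:6:left_min_rig_max} gives exactly that $(p^{i}(u))_{i \geq 0}$ is a left-minimal infimum walk to $u$. The right-maximal supremum case is handled by applying the same argument to the supremum graph and the rightmost variant of Theorem~\ref{exleft}.

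I do not expect any real obstacle here, since the corollary is essentially a syntactic combination of two results that have already been proved (or will be proved in Section~\ref{section: leftmost walk existence}). The only subtle point worth flagging explicitly in the write-up is that Theorem~\ref{exleft} provides a \emph{single} function $p$ simultaneously realizing a leftmost walk from every starting state, rather than a separate walk per state; this is precisely what allows the linear-space encoding exploited later by the data structure.
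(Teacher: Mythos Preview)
Your proposal is correct and matches the paper's own derivation: the paper states immediately before the corollary that it follows from Theorem~\ref{exleft} together with Observation~\ref{obs:6:walk_inf_sup_nfa}, which is exactly the two-step combination you outline. The one precondition you might make explicit is that every node in the infimum (supremum) graph has an incoming edge, as required by Theorem~\ref{exleft}; this holds because every state $u$ admits an infimum walk $(u_i)_{i\ge 1}$, so $(u_2,u)$ is an edge of the infimum graph.
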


\section{Inf and Sup Conflicts}
\label{section: inf sup conflicts}
We still consider $\mathcal{A}$ to be a fixed forward-stable NFA and $\leq_{FS}$ and $\leq$ be the maximum co-lex order and a co-lex extension of $\mathcal{A}$, respectively. 
We now define inf/sup conflicts.
\begin{definition}[inf sup conflicts]\label{def:6:min_max_confl}
    Let $u$ be a state of $\mathcal{A}$ and let $P_{u} = (u_{i})_{i\geq1}$ be an infimum (supremum) walk. For $j > 1$, we say that $u_{j}$ is in \emph{inf (sup) conflict} with $P_{u}$, denoted as $u_{j} \sqcap P_{u}$ $(u_{j} \sqcup P_{u})$, if there exists $\bar P_{u} = (\bar{u}_{i})_{i=1}^{j}$ satisfying: (i)~For each $i$ with $1 < i \leq j$, it holds that $\bar{u}_{i} \neq u_{i}$ and $\lambda(\bar{u}_{i}) = \lambda(u_{i})$. (ii)~It holds that $\neg({u}_{j} <_{FS} \bar{u}_{j})$ $(\neg(\bar{u}_{j} <_{FS} {u}_{j}))$.
\end{definition}

At this point, we present a first result concerning inf and sup conflicts.

\begin{restatable}[]{lemma}{lemvi}\label{lem:6:prev_conflict}
    Let $u$ be a state of $\mathcal{A}$, and $P_{u}=(u_{i})_{i\geq1}$ be an infimum (supremum) walk. If for some $j > 1$, $u_{j} \sqcap P_{u}$ $(u_{j} \sqcup P_{u})$ holds, then for any integer $j'$ with $1 < j' \leq j$, it holds that $u_{j'} \sqcap P_{u}$ $(u_{j'} \sqcup P_{u})$.
\end{restatable}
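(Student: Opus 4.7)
The plan is to prove both the inf and sup case by a single symmetric argument; I will write it for the infimum case only, stating that the supremum case is completely analogous. Fix $j > 1$ such that $u_j \sqcap P_u$ holds, let $\bar{P}_u = (\bar u_i)_{i=1}^j$ be the witness walk from Definition~\ref{def:6:min_max_confl}, and let $j'$ satisfy $1 < j' \leq j$. My plan is to produce, directly from $\bar P_u$, a witness for $u_{j'} \sqcap P_u$ by truncation. For $j' = j$ there is nothing to prove, so assume $j' < j$.

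First I would consider the natural truncation $\bar P_u^{\,j'} := (\bar u_i)_{i=1}^{j'}$. Since $\bar P_u$ is a walk, its prefix $\bar P_u^{\,j'}$ is also a walk (to $\bar u_1$); and since condition~(i) of Definition~\ref{def:6:min_max_confl} holds for $\bar P_u$ in the full range $1 < i \leq j$, it in particular holds in the smaller range $1 < i \leq j'$, so $\bar P_u^{\,j'}$ inherits the inequality-of-states and equality-of-labels requirement. All that remains is to verify condition~(ii), namely $\neg(u_{j'} <_{FS} \bar u_{j'})$; this is the one non-routine step.

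For condition~(ii), the key is to transport the existing conflict at depth $j$ down to depth $j'$ via preceding pairs. Note that $u_{j'} \neq \bar u_{j'}$ (by (i) applied to $\bar P_u$ at index $j'$), so Corollary~\ref{lem:4:prec_pairs_colex} is applicable. From $u_j \neq \bar u_j$ and $\neg(u_j <_{FS} \bar u_j)$, the same corollary yields a pair $(x,y) \rightrightarrows (u_j, \bar u_j)$ with $\lambda(x) > \lambda(y)$. I then observe that the two segments $(u_{j'}, u_{j'+1}, \dots, u_j)$ and $(\bar u_{j'}, \bar u_{j'+1}, \dots, \bar u_j)$ form a valid certificate for $(u_j, \bar u_j) \rightrightarrows (u_{j'}, \bar u_{j'})$ in the sense of Definition~\ref{def:4:prec_pairs}: they have the same length, their endpoints are $u_j, \bar u_j$, their state-pairs are pointwise distinct by condition~(i), and their labels coincide pointwise on the first $j - j'$ positions again by condition~(i). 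Observation~\ref{trivial:obs:precpair} (transitivity of $\rightrightarrows$) then gives $(x,y) \rightrightarrows (u_{j'}, \bar u_{j'})$, and a second application of Corollary~\ref{lem:4:prec_pairs_colex} (in its contrapositive form) delivers the desired $\neg(u_{j'} <_{FS} \bar u_{j'})$.

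Thus $\bar P_u^{\,j'}$ is a valid witness for $u_{j'} \sqcap P_u$. The supremum case is handled identically, merely swapping the roles in condition~(ii) of Definition~\ref{def:6:min_max_confl} and in Corollary~\ref{lem:4:prec_pairs_colex}. There is no real obstacle; the only point requiring slight care is the bookkeeping that condition~(i) of the witness walk holds exactly on the index range $1 < i \leq j$ so that both the distinctness used to apply Corollary~\ref{lem:4:prec_pairs_colex} at depth $j'$ and the label-equality needed for $(u_j, \bar u_j) \rightrightarrows (u_{j'}, \bar u_{j'})$ are simultaneously inherited from the original $\bar P_u$.
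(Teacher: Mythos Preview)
Your proof is correct and follows essentially the same approach as the paper: truncate the witness walk to length $j'$, note that condition~(i) is inherited, and establish condition~(ii) by observing that $(u_j,\bar u_j)\rightrightarrows(u_{j'},\bar u_{j'})$ and then transporting the violating preceding pair via transitivity and Corollary~\ref{lem:4:prec_pairs_colex}. The paper phrases the last step slightly more tersely (``every preceding pair of $(u_j,\bar u_j)$ is a preceding pair of $(u_{j'},\bar u_{j'})$''), but the content is identical.
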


\begin{proof}
    We prove the lemma for infimum walks and inf conflicts; the proof for supremum walks and sup conflicts is analogous. Consider an integer $j$ such that $u_{j} \sqcap P_{u}$ and an arbitrary integer $j'$ with $ 1 < j' <j$. Moreover, let $(\bar{u}_{i})_{i=1}^{j}$ be the walk for which state $u_{j}$ is in inf conflict with $P_{u}$. Clearly, $(\bar{u}_{i})_{i=1}^{j'}$ satisfies condition~(i) of Definition~\ref{def:6:min_max_confl} for $u_{j'} \sqcap P_{u}$. It remains to prove that $\neg(u_{j'} <_{FS} \bar{u}_{j'})$. Note that for each $i$ with $j' \leq i \leq j$, we have $\bar{u}_{i} \neq u_{i}$ and $\lambda(\bar{u}_{i}) = \lambda(u_{i})$. Hence every preceding pair of $(u_{j},\bar{u}_{j})$ is a preceding pair of $(u_{j'},\bar{u}_{j'})$ and it follows from Corollary~\ref{lem:4:prec_pairs_colex} that $\neg(u_{j} <_{FS} \bar{u}_{j})$ implies $\neg(u_{j'} <_{FS} \bar{u}_{j'})$. This proves the lemma.
\end{proof}

We introduce now two functions $\phi$ and $\gamma$ which will be at the basis of our data structure.

\begin{definition}[Functions $\phi$ and $\gamma$]\label{def:6:phi_gamma}
    Let $u$ be a state of $\mathcal{A}$ and let $P_{u}^{\inf}=(u_{i})_{i\geq1}$ and $P_{u}^{\sup}=(u_{i}')_{i\geq1}$ be an infimum walk and a supremum walk to $u$, respectively. We define two functions $\phi$ and $\gamma$ as
    \[
        \phi(u, P_{u}^{\inf}) \coloneqq \max(\{i < 2n : u_{i} \sqcap P_{u}^{\inf} \} \cup \{1\} ),\ 
        \gamma(u, P_{u}^{\sup}) \coloneqq \max(\{i < 2n : u_{i}' \sqcup P_{u}^{\sup} \} \cup \{1\} ).
    \]
    Furthermore, for every integer $j > 1$, we define two functions $\phi^j$ and $\gamma^j$ as 
    \[
        \phi^j(u,P_{u}^{\inf}) \coloneqq \max_{i \in [j-1]}\{\phi(u_{i}, P_{u_{i}}^{\inf}) + i - 1 \},\ 
        \gamma^j(u,P_{u}^{\sup}) \coloneqq \max_{i \in [j-1]}\{\gamma(u_{i}',P_{u_{i}'}^{\sup}) + i - 1\},
    \]
    % \begin{enumerate}
    %     \item $\phi(u,P_{u}^{\inf},j) \coloneqq \max\{\phi(u_{i}, P_{u_{i}}^{\inf}) + i - 1 : i < j\}$
    %     \item $\gamma(u,P_{u}^{\sup},j) \coloneqq \max\{\gamma(u_{i}',P_{u_{i}'}^{\sup}) + i - 1 : i < j\}$
    % \end{enumerate}
    where $P_{u_{i}}^{\inf} = (u_{i'})_{i'\geq i}$ and $P_{u'_{i}}^{\sup} = (u'_{i'})_{i'\geq i}$.
\end{definition}

In other words, the functions $\phi^j(u,P_{u}^{\inf})$ and $\gamma^j(u,P_{u}^{\sup})$ intuitively represent the largest integer $k$ for which either $u_{k} \sqcap P_{u_{i}}^{\inf}$ or $u_{k}' \sqcup P_{u_{i}'}^{\sup}$ holds for an integer $i$ with $i < j$.

\begin{lemma}\label{lem:6:main_lemma}
    Let $u,v$ be two states of $\mathcal{A}$ with $u < v$ and $\sup I_{u} > \inf I_{v}$ and let $P_{u}^{\sup} = (u_{i})_{i\geq 1}$ and $P_{v}^{\inf} = (v_{i})_{i\geq 1}$ be a right-maximal supremum walk and a left-minimal infimum walk according to the co-lex extension $\leq$, respectively. Furthermore, assume there exists an integer $j > 1$ such that $u_{j} = v_{j}$ and $u_{i} < v_{i}$ for each $i \in [j - 1]$. Then $\neg(u <_{FS} v)$ if and only if $\max\{ \gamma^j(u, P_{u}^{\sup}), \phi^j(v, P_{v}^{\inf}) \} \geq j$.
\end{lemma}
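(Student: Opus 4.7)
The plan is to prove both directions separately using Corollary~\ref{lem:4:prec_pairs_colex}, which equates $\neg(u<_{FS}v)$ with the existence of a label-inverting preceding pair $\rightrightarrows(u,v)$. First I would establish a preliminary observation: because $u_i<v_i$ for $i\in[j-1]$ and $u_j=v_j$, no label discrepancy $\lambda(u_i)\ne\lambda(v_i)$ can occur on $[1,j]$. Indeed, $\lambda(u_i)>\lambda(v_i)$ would by Observation~\ref{trv_obs} force $v_i<_{FS}u_i$ and hence $v_i<u_i$ in the extension $\leq$, contradicting either $u_i<v_i$ (for $i<j$) or $u_j=v_j$; conversely, $\lambda(u_i)<\lambda(v_i)$ combined with $\sup I_u>\inf I_v$ would put the first position of discrepancy of $\sup I_u$ and $\inf I_v$ at a position $\le j$, again contradicting $u_j=v_j$. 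Consequently $\sup I_u[1..j]=\inf I_v[1..j]$, so the canonical walks $P_u^{\sup}$ and $P_v^{\inf}$ agree in labels through position $j$.

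For direction $(\Leftarrow)$, assume without loss of generality that $\gamma^j(u,P_u^{\sup})\ge j$ (the $\phi^j$ case is symmetric after swapping $u\leftrightarrow v$, $\sup\leftrightarrow\inf$, and right-maximality with left-minimality). Unfolding the definition yields indices $i^*\in[j-1]$, $\ell\ge j$, and a witness walk $(\bar u_i)_{i=i^*}^{\ell}$ with $\bar u_{i^*}=u_{i^*}$, $\bar u_i\ne u_i$ and $\lambda(\bar u_i)=\lambda(u_i)$ for $i^*<i\le\ell$, and $\neg(\bar u_\ell<_{FS}u_\ell)$. I would then form the two hybrid walks of length $\ell$,
\[
A=(u_1,\dots,u_{i^*},\bar u_{i^*+1},\dots,\bar u_\ell)\ \text{from } u,\qquad
B=(v_1,\dots,v_{j-1},v_j=u_j,u_{j+1},\dots,u_\ell)\ \text{from } v,
\]
ending at $\bar u_\ell$ and $u_\ell$, and show that they realize the preceding pair $(\bar u_\ell,u_\ell)\rightrightarrows(u,v)$. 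Labels of $A$ and $B$ agree position by position by the preliminary and the conflict hypothesis. Distinctness of state pairs is immediate on $[1,i^*]$ (from $u_i<v_i$) and on $[j,\ell]$ (since $\bar u_i\ne u_i$ for $i>i^*$, and $\bar u_j\ne u_j=v_j$). The subtle case is $(i^*,j-1]$: to preclude $\bar u_i=v_i$ there, I would use the right-maximality of $P_u^{\sup}$, which forces any alternative predecessor chain in the supremum graph to satisfy $\bar u_i\le u_i$; combined with $u_i<v_i$ this yields $\bar u_i<v_i$, hence $\bar u_i\ne v_i$. Should an accidental merge $\bar u_{i_0}=v_{i_0}$ still arise in this range, the argument can be rescued by truncating both walks at $i_0$ and applying an inductive instance of the lemma to $(u_{i_0},v_{i_0})$. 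Once $(\bar u_\ell,u_\ell)\rightrightarrows(u,v)$ is established, combining it via Observation~\ref{trivial:obs:precpair} with a preceding pair inside $(\bar u_\ell,u_\ell)$ given by Corollary~\ref{lem:4:prec_pairs_colex} (because $\neg(\bar u_\ell<_{FS}u_\ell)$) yields a label-inverting preceding pair of $(u,v)$, so $\neg(u<_{FS}v)$.

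For direction $(\Rightarrow)$, assume $\neg(u<_{FS}v)$ and pick via Corollary~\ref{lem:4:prec_pairs_colex} a preceding pair $(\bar u,\bar v)\rightrightarrows(u,v)$ with $\lambda(\bar u)>\lambda(\bar v)$, realized by walks $A=(a_i)_{i=1}^l$ and $B=(b_i)_{i=1}^l$ with matching labels on $[1,l-1]$ and $a_i\ne b_i$ throughout. The label strings satisfy $\alpha\le\sup I_u[1..l]$ and $\beta\ge\inf I_v[1..l]$ and agree on $[1,l-1]$, so the common prefix is sandwiched between $\inf I_v$ and $\sup I_u$. Combined with the preliminary, this forces $l\ge j$ and that at least one of $A,B$ has already deviated from the corresponding canonical walk before position $j$ (otherwise $a_j=u_j=v_j=b_j$, contradicting distinctness). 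Without loss of generality, $A$ first deviates from $P_u^{\sup}$ at position $i^*+1\in[2,j]$: $a_{i^*}=u_{i^*}$ and $a_{i^*+1}\ne u_{i^*+1}$. Then $(u_{i^*},a_{i^*+1},\dots,a_l)$ witnesses $a_l\sqcup P_{u_{i^*}}^{\sup}$ at depth $l-i^*+1$, and the clause $\neg(a_l<_{FS}u_l)$ of Definition~\ref{def:6:min_max_confl} is obtained by constructing, from fragments of $A$, $B$, and $P_u^{\sup}$, a preceding pair of $(a_l,u_l)$ whose label inversion comes from $\lambda(a_l)>\lambda(b_l)$, and applying Corollary~\ref{lem:4:prec_pairs_colex} again. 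This yields $\gamma^j(u,P_u^{\sup})\ge j$.

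I expect the main obstacle to be the distinctness argument on $(i^*,j-1]$ in $(\Leftarrow)$ and, dually in $(\Rightarrow)$, the argument that the deviation of the chosen preceding-pair walks from $P_u^{\sup},P_v^{\inf}$ occurs strictly before position $j$. Both steps hinge on the right-maximality of $P_u^{\sup}$ and the left-minimality of $P_v^{\inf}$ to preclude spurious state coincidences that would otherwise break the constructed preceding pair or the conflict witness.
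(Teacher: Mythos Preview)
Your overall architecture matches the paper's, but both directions contain genuine gaps that your sketch does not close.

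\textbf{($\Leftarrow$): the distinctness step on $(i^*,j-1]$ is wrong.} You write that right-maximality of $P_u^{\sup}$ ``forces any alternative predecessor chain in the supremum graph to satisfy $\bar u_i\le u_i$''. But the conflict witness $(\bar u_i)$ from Definition~\ref{def:6:min_max_confl} is only a walk in $\mathcal A$ with matching labels; its edges need not lie in the supremum graph, so right-maximality (Definition~\ref{def:6:left_min_rig_max}, which quantifies over walks in the supremum graph) says nothing about $\bar u_i$ directly. Your inductive rescue is also circular: truncating at a coincidence $\bar u_{i_0}=v_{i_0}$ and ``applying the lemma to $(u_{i_0},v_{i_0})$'' would require knowing $\neg(u_{i_0}<_{FS}v_{i_0})$, which you do not have. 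The paper's fix is different and clean: if $\bar u_{i'}=v_{i'}$ for some $i^*<i'<j$, splice $(u_1,\dots,u_{i^*},\bar u_{i^*+1},\dots,\bar u_{i'},v_{i'+1},\dots,v_{j-1},u_j,u_{j+1},\dots)$. By the label agreement you already established on $[1,j]$ this is a genuine supremum walk to $u$; it coincides with $P_u^{\sup}$ at position $j$ but has predecessor $v_{j-1}>u_{j-1}$ at position $j-1$, contradicting right-maximality. That is how right-maximality enters.

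\textbf{($\Rightarrow$): wrong choice of deviation point and unjustified condition~(ii).} Taking the \emph{first} deviation $i^*$ of $A$ from $P_u^{\sup}$ does not give a conflict witness: Definition~\ref{def:6:min_max_confl}(i) requires $a_i\neq u_i$ for \emph{all} $i$ beyond the anchor, and nothing prevents $A$ from re-merging with $P_u^{\sup}$ later. The paper instead takes the \emph{last} agreement $h=\max\{i\in[j-1]:\bar u_i=u_i\}$ (and symmetrically $h'$ for $\bar v,v$), and works at depth $j$ rather than $l$. This matters twice: first, $j<2n$ (via Lemma~\ref{lem:5:sup_inf_vs_walks}) so the conflict actually contributes to $\gamma^j$; second, at depth $j$ condition~(ii) follows by pure order reasoning. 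From $(\bar u,\bar v)\rightrightarrows(\bar u_j,\bar v_j)$ and $\lambda(\bar u)>\lambda(\bar v)$ one gets $\neg(\bar u_j<_{FS}\bar v_j)$, and since $u_j=v_j$ transitivity gives $\neg(\bar u_j<_{FS}u_j)\ \vee\ \neg(v_j<_{FS}\bar v_j)$, which is exactly the disjunction $u_j\sqcup P_{u_h}^{\sup}\ \vee\ v_j\sqcap P_{v_{h'}}^{\inf}$ you need. Your attempt to certify $\neg(a_l<_{FS}u_l)$ ``from fragments of $A,B,P_u^{\sup}$'' does not work as stated: you have no walk from $u_l$ that is guaranteed to stay disjoint from $A$ beyond position $j$, so you cannot build the required preceding pair of $(a_l,u_l)$.

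In short, the skeleton is right, but the two places you flagged as ``the main obstacle'' are exactly where your argument fails; both are repaired by the splicing-against-right-maximality idea and by anchoring at the last agreement in $[j-1]$ and working at depth $j$.
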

\begin{proof}
    $(\Rightarrow)$ By Corollary~\ref{lem:4:prec_pairs_colex}, $\neg(u <_{FS} v)$ implies the existence of a pair $(\bar{u},\bar{v})$ with $(\bar{u},\bar{v})\rightrightarrows(u,v)$ such that $\lambda(\bar{u}) > \lambda(\bar{v})$. 
    Let $(\bar{u}_{i})_{i=1}^{j'}$ and $(\bar{v}_{i})_{i=1}^{j'}$ be the walks for which $(\bar{u},\bar{v})\rightrightarrows(u,v)$ holds, respectively. 
    Lemma~\ref{lem:5:sup_inf_vs_walks} implies that $\lambda(u_{i}) = \lambda(v_{i})$ for each $i \in [j]$ and $j < 2n$. Furthermore, we have $\lambda(\bar{u}_{i}) \leq \lambda(u_{i})$ and $\lambda(v_{i}) \leq \lambda(\bar{v}_{i})$ for each $i \in [j']$ by the properties of supremum and infimum walks. By the properties of preceding pairs (see Definition~\ref{def:4:prec_pairs}), for each $i \in [j]$, it holds that $\lambda(\bar{u}_{i})= \lambda(\bar{v}_{i})$ and thus $\lambda(\bar{u}_{i}) = \lambda(u_{i})$ and $\lambda(v_{i}) = \lambda(\bar{v}_{i})$. Moreover, since $\lambda(\bar{u}) > \lambda(\bar{v})$, it follows that $j' > j$.
    Due to the walks $(\bar{u}_{i})_{i=j}^{j'}$ and $(\bar{u}_{i})_{i=j}^{j'}$, it holds that $(\bar{u},\bar{v})\rightrightarrows(\bar{u}_{j},\bar{v}_j)$, which by Corollary~\ref{lem:4:prec_pairs_colex} and $\lambda(\bar{u}) > \lambda(\bar{v})$ implies $\neg(\bar{u}_{j} <_{FS} \bar{v}_{j})$.
    Now, define $h := \max\{i\in [j-1] : \bar{u}_{i} = u_{i} \}$ and $h' := \max \{ i\in[j -1] :\bar{v}_{i} = v_{i}\}$ and consider $P_{u_{h}}^{\sup} = (u_{i})_{i\geq h}$ and $P_{v_{h'}}^{\inf} = (v_{i})_{i\geq h'}$. 
    In order to prove $\max\{ \gamma^j(u, P_{u}^{\sup}), \phi^j(v, P_{v}^{\inf}) \} \geq j$, since $j < 2n$, it is thus sufficient to prove that 
     $A:=
        u_{j} \sqcup P_{u_{h}}^{\sup} \, \vee \, 
        v_{j} \sqcap P_{v_{h'}}^{\inf}$
    holds. Due to the previous considerations, $(\bar{u}_{i})_{i=h}^{j}$ satisfies condition~(i) of Definition~\ref{def:6:min_max_confl} for $u_{j} \sqcup P^{\sup}_{u_{h}}$, while $(\bar{v}_i)_{i=h'}^{j}$ satisfies condition~(i) of Definition~\ref{def:6:min_max_confl} for $v_{j} \sqcap P_{v_{h'}}^{\inf}$. Hence, $A$ holds if and only if $\neg(\bar{u}_{j} <_{FS} u_{j}) \vee \neg(v_{j} <_{FS} \bar{v}_{j})$ or equivalently $\neg(\bar{u}_{j} <_{FS} u_{j}\wedge v_{j} <_{FS} \bar{v}_{j})$. Since $u_{j} = v_{j}$ and $\leq_{FS}$ is transitive, the conclusion is implied by $\neg(\bar{u}_{j} <_{FS} \bar{v}_{j})$, which we argued above.
    
    $(\Leftarrow)$ Consider the case where the maximum is attained by $\gamma^j(u, P_{u}^{\sup})$ (the other case is analogous) and call that value $h$. We know $h \geq j$ by hypothesis. By Corollary~\ref{lem:4:prec_pairs_colex}, we have to prove the existence of a pair $(\bar{u},\bar{v})$ with $(\bar{u},\bar{v})\rightrightarrows(u,v)$ such that $\lambda(\bar{u}) > \lambda(\bar{v})$. 
    Let $k\in[j-1]$ be such that $\gamma(u_{k},P_{u_{k}}^{\sup}) + k - 1 = h$. Due to Observation~\ref{obs:6:walk_inf_sup_nfa}, the walk $P_{u_{k}}^{\sup} = (u_{i})_{i \geq k}$ is a supremum walk to $u_k$. Since $u_{h} \sqcup P^{\sup}_{u_{k}}$ and $j \leq h$, by Lemma~\ref{lem:6:prev_conflict} it follows that $u_{j} \sqcup P^{\sup}_{u_{k}}$. Therefore, by Definition~\ref{def:6:min_max_confl} there exists a walk $(\bar{u}_{i})_{i=k}^{j}$ such that $\bar{u}_{i} \neq u_{i}$ and $\lambda(\bar{u}_{i}) = \lambda(u_{i})$ for each $i$ with $k < i \leq j$ as well as $\neg(\bar{u}_{j} <_{FS} u_{j})$. Now define $\bar{u}_{i} := u_{i}$ for $i \in [k]$ and consider the two walks $(\bar{u}_{i})_{i=1}^{j}$ and $(v_{i})_{i=1}^{j}$. Note that $\bar{u}_{k} = u_k$. From Definition~\ref{def:6:min_max_confl} and Lemma~\ref{lem:5:sup_inf_vs_walks} it follows that $\lambda(\bar{u}_i) = \lambda(v_{i})$ for each $i \in [j]$.
    Suppose now that $\bar{u}_{i'} = v_{i'}$ for some $i'$ with $k < i' < j$. We can then define a new supremum walk $(\hat{u}_{i})_{i\geq1}$ to $u$ as follows. We define (i)~$\hat{u}_{i} := \bar{u}_{i}$ for $i \in [i']$, (ii)~$\hat{u}_{i} := v_{i}$ for $i$ with $i' < i < j$, and (iii)~$\hat{u}_{i} := u_{i}$ for $i \geq j$. Since $u_{j} = \hat{u}_{j}$ and $u_{j-1} < \hat{u}_{j-1}$, we conclude that $P_{u}^{\sup}$ is not a right-maximal supremum walk to $u$, a contradiction. Hence, $\bar{u}_{i} \neq v_{i}$ for all $i\in [j]$ and consequently $(\bar{u}_{j},v_{j})\rightrightarrows(u,v)$. Note that $u_{j} = v_{j}$ holds by hypothesis. Finally, since $\neg(\bar{u}_{j} <_{FS} u_{j})$, by Corollary~\ref{lem:4:prec_pairs_colex}, there exists $(\bar{u},\bar{v})$ with $(\bar{u},\bar{v})\rightrightarrows(\bar{u}_{j}, u_{j})$ such that $\lambda(\bar{u}) > \lambda(\bar{v})$. Observation~\ref{trivial:obs:precpair} then implies $(\bar{u},\bar{v})\rightrightarrows(u,v)$.
    % and completes the proof.
\end{proof}

\section{Data Structure}
\label{section: data structure}
In this section, we present our data structure and finally prove Theorem~\ref{6:thr:final_theorem}, the main theorem of this article.
Motivated by the above lemma and our previous observations, the data structure that enjoys the properties promised in Theorem~\ref{6:thr:final_theorem} can be defined as follows.
\begin{description}
    \item[Store.] For each state $u$ of $\mathcal{A}$, we store (i)~a left-minimal infimum walk $P_{u}^{\inf}$ to $u$, (ii)~a right-maximal supremum walk $P_{u}^{\sup}$ to $u$, and (iii)~the integers $\phi(u,P_{u}^{\inf})$ and $\gamma(u,P_{u}^{\sup})$. Furthermore, we store a co-lex extension $\leq$ of $\mathcal{A}$. A possible co-lex extension of $\mathcal{A}$ can be computed in $O(m \log n)$ time using the \emph{ordered partition refinement algorithm}~\cite[Algorithm 1]{becker2023sorting}, where $n$ is the number of states of $\mathcal{A}$, and $m$ the number of transition.
    \item[Query.] The procedure of a query on two arbitrary states $u,v$ is illustrated in Figure~\ref{dectree}. If $u = v$, then $u \leq_{FS} v$ by reflexivity. If $v < u$, then $\neg(u <_{FS} v)$ by Definition~\ref{linear_ext}. If $u < v$, then four possible cases may arise. (i)~Let us consider $\sup I_{u}$ and $\inf I_{v}$, which can be reconstructed from $P_{u}^{\sup}$ and $P_{v}^{\inf}$. If $\sup I_{u} \leq \inf I_{v}$ by Lemma~\ref{lem:5:sup_inf_co_lex} $u <_{FS} v$. (ii)~Otherwise, $\sup I_{u} > \inf I_{v}$ holds. We then check, if there exists $j$ such that $v_{j} < u_{j}$ and $u_{i} < v_{i}$ for all $i\in [j-1]$. If this is the case, then $\neg(u <_{FS} v)$ according to Lemma~\ref{lem:5:sup_inf_vs_walks}. Otherwise, we let $j'$ with $1 < j' < j$ be such that $u_{j'} = v_{j'}$ and $u_{i} < v_{i}$ for all $i\in [j'-1]$. (iii)~By Lemma~\ref{lem:6:main_lemma}, if $h:=\max\{ \gamma^{j'}(u, P_{u}^{\sup}), \phi^{j'}(v, P_{v}^{\inf}) \} < j'$, we know that $u <_{FS} v$, (iv)~while if $h \geq j'$, we know that $\neg(u <_{FS} v)$. 
\end{description}
We refer the reader to Figure~\ref{fig:6:enc_ex} for an example of this data structure.
The correctness of our data structure follows immediately from the description above. In order to establish Theorem~\ref{6:thr:final_theorem}, it remains to argue why the space and query time bounds hold.
\begin{proof}[Proof of Theorem~\ref{6:thr:final_theorem}]
    By Corollary~\ref{ex:leftminimal}, we know that this information can be stored in $O(n)$ space.
    By Lemma~\ref{2nbound}, we can check $\sup I_{u} \leq \inf I_{v}$ in $O(n)$ time, and if this is the case by Lemma~\ref{lem:5:sup_inf_co_lex} $u <_{FS} v$.
    Otherwise, by Lemma~\ref{lem:5:sup_inf_vs_walks}, we can indentify in $O(n)$ time the integer $j < 2n$ such that $v_j < u_j$ and $u_i \leq v_i$ for all $i \in [j-1]$.
    If $u_i < v_{j}$ for all $i \in [j-1]$, then $\neg(u <_{FS} v)$ by Lemma~\ref{lem:5:sup_inf_vs_walks}. 
    On the other hand, if there exists $j'$ with (i)~$j' < j$, (ii)~$u_{j'} = v_{j'}$, and (iii)~$u_{i} < v_{i}$ for all $i \in [j'-1]$, by Definition~\ref{def:6:phi_gamma}, since $j' < 2n$, we can compute $\gamma^{j'}(u,P_u^{\sup})$ and $\phi^{j'}(v,P_v^{\inf})$ in $O(n)$ time.
    Hence the total time complexity is $O(n)$.
\end{proof}

\captionsetup[subfloat]{position=top,labelformat=empty}
\begin{figure}[!ht]
\centering
\resizebox{0.97\textwidth}{!}{
\subfloat[]{
\begin{tikzpicture}[
dim/.style={minimum size=2em, font={\small}}, 
scale = 0.80,
dots/.style={text centered}]

    \node[dots] (0) at (0,-2.1) {$(a)$};

    \node[state,dim] (1) at (0,0) {$1$};
    \node[state,dim] (9) at (0,1.4) {$9$};
    \node[state,dim] (13) at (-1.4,1.4) {$13$};
    \node[state,dim] (8) at (1.4,1.4) {$8$};
    \node[state,dim] (2) at (-1.4,2.8) {$2$};
    \node[state,dim] (6) at (0,2.8) {$6$};
    \node[state,dim] (12) at (1.4,2.8) {$12$};
    \node[state,dim] (5) at (2.6,2.8) {$5$};
    \node[state,dim] (3) at (-1.4,4.2) {$3$};
    \node[state,dim] (4) at (0,4.2) {$4$};
    \node[state,dim] (7) at (1.4,4.2) {$7$};
    \node[state,dim] (10) at (-1.4,5.8) {$10$};
    \node[state,dim] (11) at (2.6,5.8) {$11$};

    \draw[-Stealth] (1) to[loop below] node [below] {$\#$} (1);
    \draw[-Stealth] (1) to node [right] {$b$} (9);
    \draw[-Stealth] (1) to (1.4,0) to node [left] {$b$} (8);
    \draw[-Stealth] (1) to (-1.5,0) to[bend left=55] node [left] {$a$} (2);
    \draw[-Stealth] (2) to node [left] {$a$} (3);
    \draw[-Stealth] (3) to node [left] {$b$} (10);
    \draw[-Stealth] (4) to (0,5.8) to node [above] {$b$} (10);
    \draw[-Stealth] (5) to node [left] {$b$} (11);
    \draw[-Stealth] (6) to node [left] {$a$} (4);
    \draw[-Stealth] (6) to node [left] {$a$} (7);
    \draw[-Stealth] (7) to node [left] {$b$} (11);
    \draw[-Stealth] (8) to node [left] {$a$} (5);
    \draw[-Stealth] (8) to node [left] {$b$} (12);
    \draw[-Stealth] (9) to[bend left = 45] node [below] {$b$} (13);
    \draw[-Stealth] (9) to node [right] {$a$} (6);
    \draw[-Stealth] (12) to node [left] {$a$} (7);
    \draw[-Stealth] (13) to node [left] {$a$} (2);
    \draw[-Stealth] (13) to[bend left = 45] node [above] {$b$} (9);

    \begin{scope}[xshift=16em]

    \node[dots] (0) at (0,-2.1) {$(b)$};
    
    \node[state,dim] (1) at (0,0) {$1$};
    \node[state,dim] (9) at (0,1.4) {$9$};
    \node[state,dim] (13) at (-1.4,1.4) {$13$};
    \node[state,dim] (8) at (1.4,1.4) {$8$};
    \node[state,dim] (2) at (-1.4,2.8) {$2$};
    \node[state,dim] (6) at (0,2.8) {$6$};
    \node[state,dim] (12) at (1.4,2.8) {$12$};
    \node[state,dim] (5) at (2.6,2.8) {$5$};
    \node[state,dim] (3) at (-1.4,4.2) {$3$};
    \node[state,dim] (4) at (0,4.2) {$4$};
    \node[state,dim] (7) at (1.4,4.2) {$7$};
    \node[state,dim] (10) at (-1.4,5.8) {$10$};
    \node[state,dim] (11) at (2.6,5.8) {$11$};

    \draw[-Stealth] (1) to[loop below] node [below] {$\#$} (1);
    \draw[-Stealth] (1) to node [right] {$b$} (9);
    \draw[-Stealth] (1) to (1.4,0) to node [left] {$b$} (8);
    \draw[-Stealth] (1) to (-1.5,0) to[bend left=55] node [left] {$a$} (2);
    \draw[-Stealth] (2) to node [left] {$a$} (3);
    \draw[-Stealth] (3) to node [left] {$b$} (10);
    %\draw[-Stealth] (4) to (0,5.8) to node [above] {$b$} (10);
    %\draw[-Stealth] (5) to node [left] {$b$} (11);
    \draw[-Stealth] (6) to node [left] {$a$} (4);
    \draw[-Stealth] (6) to node [left] {$a$} (7);
    \draw[-Stealth] (7) to node [left] {$b$} (11);
    \draw[-Stealth] (8) to node [left] {$a$} (5);
    \draw[-Stealth] (8) to node [left] {$b$} (12);
    \draw[-Stealth] (9) to[bend left = 45] node [below] {$b$} (13);
    \draw[-Stealth] (9) to node [right] {$a$} (6);
    %\draw[-Stealth] (12) to node [left] {$a$} (7);
    %\draw[-Stealth] (13) to node [left] {$a$} (2);
    %\draw[-Stealth] (13) to[bend left = 45] node [above] {$b$} (9);
    \end{scope}

    \begin{scope}[xshift=32em]

    \node[dots] (0) at (0,-2.1) {$(c)$};
    
    \node[state,dim] (1) at (0,0) {$1$};
    \node[state,dim] (9) at (0,1.4) {$9$};
    \node[state,dim] (13) at (-1.4,1.4) {$13$};
    \node[state,dim] (8) at (1.4,1.4) {$8$};
    \node[state,dim] (2) at (-1.4,2.8) {$2$};
    \node[state,dim] (6) at (0,2.8) {$6$};
    \node[state,dim] (12) at (1.4,2.8) {$12$};
    \node[state,dim] (5) at (2.6,2.8) {$5$};
    \node[state,dim] (3) at (-1.4,4.2) {$3$};
    \node[state,dim] (4) at (0,4.2) {$4$};
    \node[state,dim] (7) at (1.4,4.2) {$7$};
    \node[state,dim] (10) at (-1.4,5.8) {$10$};
    \node[state,dim] (11) at (2.6,5.8) {$11$};

    \draw[-Stealth] (1) to[loop below] node [below] {$\#$} (1);
    %\draw[-Stealth] (1) to node [right] {$b$} (9);
    \draw[-Stealth] (1) to (1.4,0) to node [left] {$b$} (8);
    %\draw[-Stealth] (1) to (-1.5,0) to[bend left=55] node [left] {$a$} (2);
    \draw[-Stealth] (2) to node [left] {$a$} (3);
    %\draw[-Stealth] (3) to node [left] {$b$} (10);
    \draw[-Stealth] (4) to (0,5.8) to node [above] {$b$} (10);
    %\draw[-Stealth] (5) to node [left] {$b$} (11);
    \draw[-Stealth] (6) to node [left] {$a$} (4);
    %\draw[-Stealth] (6) to node [left] {$a$} (7);
    \draw[-Stealth] (7) to node [left] {$b$} (11);
    \draw[-Stealth] (8) to node [left] {$a$} (5);
    \draw[-Stealth] (8) to node [left] {$b$} (12);
    \draw[-Stealth] (9) to[bend left = 45] node [below] {$b$} (13);
    \draw[-Stealth] (9) to node [right] {$a$} (6);
    \draw[-Stealth] (12) to node [left] {$a$} (7);
    \draw[-Stealth] (13) to node [left] {$a$} (2);
    \draw[-Stealth] (13) to[bend left = 45] node [above] {$b$} (9);
    \end{scope}
\end{tikzpicture}
}\hspace{1em}\subfloat[]{
\renewcommand{\arraystretch}{1.1}
\begin{tabular}{| C{1em} C{4.2em} C{4.2em} C{1em} C{1em} |}
    %\hline
    %\multicolumn{5}{|c|}{\textbf{Encoding}} \\
    \hline
    $u$ & $\inf I_{u}$ & $\sup I_{u}$ & $\phi$ & $\gamma$ \\
    \hline
    $1$ & $\#^\omega$ & $\#^\omega$ & 1 & 1 \\ 
    $2$ & $a\#^\omega$ & $ab^{\omega}$ & 1 & 1 \\ 
    $3$ & $aa\#^\omega$ & $aab^{\omega}$ & 1 & 1 \\
    $4$ & $aab\#^\omega$ & $aab^{\omega}$ & 1 & 1 \\
    $5$ & $ab\#^{\omega}$ & $ab\#^{\omega}$ & 1 & 1 \\
    $6$ & $ab\#^\omega$ & $ab^{\omega}$ & 1 & 1 \\
    $7$ & $aab\#^\omega$ & $abb\#^{\omega}$ & 1 & 1 \\
    $8$ & $b\#^{\omega}$ & $b\#^{\omega}$ & 1 & 1 \\
    $9$ & $b\#^{\omega}$ & $b^{\omega}$ & 1 & 1 \\
    $10$ & $baa\#^{\omega}$ & $baab^{\omega}$ & 1 & 25 \\
    $11$ & $baab\#^{\omega}$ & $babb\#^{\omega}$ & 2 & 1\\
    $12$ & $bb\#^{\omega}$ & $bb\#^{\omega}$ & 1 & 1\\
    $13$ & $bb\#^{\omega}$ & $b^{\omega}$ & 1 & 1 \\
    \hline
\end{tabular}
}}
\caption{Consider the forward-stable NFA $\mathcal{A}$ in Figure (a). Each state is assigned an integer $i$ indicating its position in the co-lex extension $\leq$. We denote by $u_{i}$ the $i$-th state according to $\leq$. Figures (b) and (c) show the NFAs encoding a left-minimal infimum walk and a right-maximal supremum walk, respectively, for each state. The table on the right shows for each state $u$ the values of $\inf I_{u}$, $\sup I_{u}$, $\phi(u,P_{u}^{\inf})$, and $\gamma(u,P_{u}^{\sup})$, where $P_{u}^{\inf}$ and $P_{u}^{\sup}$ are the walks shown in Figures (b) and (c). Our data structure comprises $\le$, the walks in Figures (b) and (c), and the two columns $\phi$, $\gamma$ from the table. We sketch the four cases that arise when determining whether $u <_{FS} v$ holds, assuming $u < v$. (i)~By Lemma~\ref{lem:5:sup_inf_co_lex}, since $\sup I_{u_3} \leq \inf I_{u_5}$, it follows that $u_3 <_{FS} u_{5}$. (ii)~Consider $P_{u_{2}}^{\sup} = u_2,u_{13}\ldots$ and $P_{u_6}^{\inf} = u_6,u_9\ldots$, since $\sup I_{u_2} > \inf I_{u_6}$, and $u_2 < u_6, u_{13} > u_9$, by Lemma~\ref{lem:5:sup_inf_vs_walks}, $\neg(u_{2} <_{FS} u_{6})$. (iii)~Consider now $P_{u_4}^{\sup} = u_{4},u_{6}\ldots$ and $P_{u_7}^{\inf} = u_{7},u_{6}\ldots$. Since, $\sup I_{u_4} > \inf I_{u_7}$, $u_{4} < u_{7},u_{6} = u_6$, and $\max\{ \gamma^2(u_4, P_{u_4}^{\sup}), \phi^2(u_7, P_{u_7}^{\inf}) \} = 1 < 2$, by Lemma~\ref{lem:6:main_lemma}, we can conclude $u_4 <_{FS} u_7$. (iv)~Finally, consider $P_{u_{10}}^{\sup} = u_{10},u_{4},u_{6}\ldots$ and $P_{11}^{\inf} = u_{11},u_{7},u_{6}\ldots$, due to the fact that $\sup I_{u_{10}} > \inf I_{u_{11}}$, $u_{10} < u_{11}$, $u_{4} < u_{7}$, $u_{6} = u_{6}$, and $\max\{\gamma^3(u_{10}, P_{u_{10}}^{\sup}), \phi^3(u_{11}, P_{u_{11}}^{\inf}) \} = 26 \geq 3$, by Lemma~\ref{lem:6:main_lemma}, we conclude that $\neg(u_{10} <_{FS} u_{11})$.
% \vspace{-10pt}
}
\label{fig:6:enc_ex}
\end{figure}
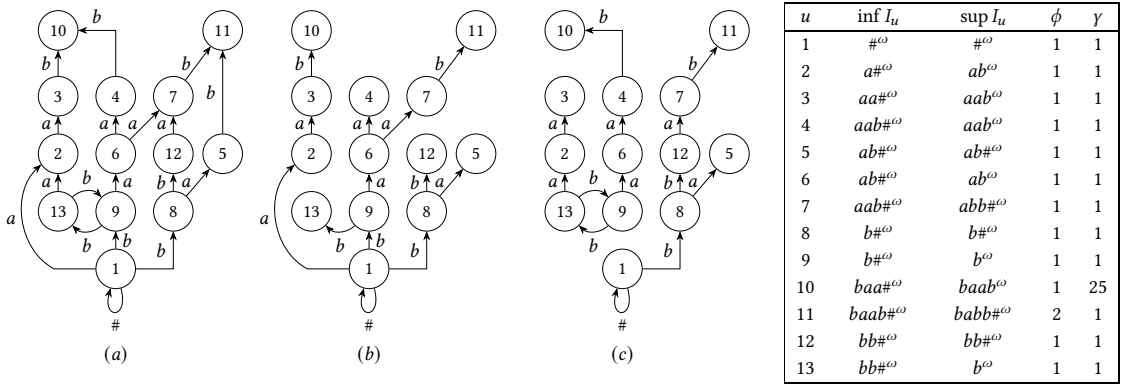

\section{Existence of a Leftmost Walk}
\label{section: leftmost walk existence}

In this section, we consider an (unlabeled) directed graph $ G = (V, E) $, and a total order $ \le $ on $ V $. We always assume that every node of $ G $ has at least one incoming edge. Moreover, a walk to a node $u \in V$, denoted by $P_u = (u_i)_{i=1}^{l}$, is a sequence of $l$ nodes such that: (i)~$u_{1} = u$, and (ii)~$(u_{i+1},u_{i}) \in E$ for each $i \in [l-1]$. We denote by $P_u = (u_i)_{i \geq 1}$ a walk to $u$ of infinite length. We fix some further graph-related notation. The \emph{induced subgraph of $G$ on $V' \subseteq V$} is the graph $G[V'] := (V', E\cap(V'\times V'))$. The \emph{subgraph of $G$ reachable from a subset $S\subseteq V$}, denoted by $\delta_G(S)$, is the induced subgraph $G[V']$ on the nodes $V'=\{v\in V:\exists P_v = (v_i)_{i = 1}^\ell \text{ with } v_\ell\in S\}$. For two sets of nodes $S,T\subseteq V$, we call $N(S):=\{v\in V: \exists u\in S \text{ and }(u,v)\in E\}$ the neighbors of $S$, $E(S):=E\cap (S\times V)$ the edges from $S$ to $N(S)$, and $E(S,T):=E\cap (S\times T)$ the edges from $S$ to $T$. For two directed graphs $G_1=(V_1, E_1)$ and $G_2=(V_2, E_2)$ their union is defined as $G_1 \cup G_2 := (V_1 \cup V_2, E_1 \cup E_2)$.

The rest of the section is devoted to proving the following theorem. 
\begin{theorem}\label{exleft}
    Let $ G = (V, E) $ be a directed graph, and let $\le $ be a total order on $ V $. Then, there exists a function $ p: V \rightarrow V $ such that $P_{u} = (p^i (u))_{i\ge 0} $ is a leftmost (rightmost) walk to $u$ for every $ u \in V $.
\end{theorem}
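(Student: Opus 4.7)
The plan is to prove Theorem~\ref{exleft} by induction on $|V|$, exploiting the strongly connected component (SCC) structure of $G$. The base case $|V|=1$ is immediate: the unique node must have a self-loop by the in-degree $\ge 1$ assumption, so $p(v) = v$ gives the leftmost walk $v, v, v, \ldots$.

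For the inductive step, I would process the SCCs of $G$ in the topological order of the condensation, starting with the source SCCs of $G$ (equivalently, sink SCCs of the reverse graph $H = G^R$). The key structural observation is that every infinite walk in $G$, traced backward via predecessors, eventually enters some source SCC and remains there, since the condensation DAG is finite. Within each source SCC $S$, walks from $v \in S$ stay inside $S$ (as all predecessors of $v$ lie in $S$); I would construct $p|_S$ by an inner induction on the structure of $G[S]$, for instance by removing a carefully chosen edge from $G[S]$, applying the induction hypothesis to the smaller graph to obtain $p'$, and then verifying that $p'$ gives leftmost walks in the original $G[S]$. For nodes $v$ in non-source SCCs, I would define $p(v)$ greedily: for each candidate predecessor $w$, the leftmost continuation $L(w) = (p^i(w))_{i\ge 0}$ is already defined (either within an ancestor SCC or within the same SCC via the inner induction), and I would set $p(v)$ to the predecessor $w$ minimizing the walk $v \cdot L(w)$ in the preorder $\preceq$, where $P \preceq Q$ iff $P_j = Q_j$ implies $P_{j-1} \le Q_{j-1}$ for every $j > 1$, breaking ties by preferring the smallest $w$.

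The main obstacle is the inner induction within a source SCC: selecting an appropriate edge to remove and verifying that the inductively constructed $p'$ still gives leftmost walks in the original $G[S]$. A naive strategy---removing the edge $(w^*, v_0)$ where $w^* = \max_\le \text{pred}_{G[S]}(v_0)$ for some $v_0$ with multiple predecessors---can fail: small examples show that the correct $p(v_0)$ is not always the minimum-index predecessor, and in fact can be the largest predecessor, because the continuation $L(p(v_0))$ from that predecessor may be small enough in $\preceq$ to dominate continuations from smaller predecessors at later ``convergence points'' where walks agree. Consequently, the edge-removal criterion must respect this asymmetry, which likely requires a preliminary computation identifying the ``dominated'' predecessor of each multi-predecessor node. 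An alternative is to bypass edge removal entirely in favor of a direct fixed-point construction on the SCC: iteratively refine an initial guess $p^{(0)}$ (say $p^{(0)}(v) = \min \text{pred}_G(v)$) by updating $p^{(k+1)}(v)$ to the predecessor $w$ minimizing $v \cdot L^{(k)}(w)$ in $\preceq$, with convergence argued by monotonicity of the sequence of walks under a suitable lexicographic potential on $(L^{(k)}(v))_{v\in V}$.
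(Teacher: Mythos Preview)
Your proposal is a plan with acknowledged holes rather than a proof, and the holes are exactly where the content lies.

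The central gap is the construction of $p$ inside a strongly connected piece. You yourself note that the edge-removal induction fails naively because the ``correct'' predecessor of a node need not be its $\le$-smallest one, and you do not supply a working replacement: the fixed-point iteration you sketch has no termination or monotonicity argument (what potential decreases? why can the sequence of walk-tuples not cycle?). Your handling of non-source SCCs has the same problem in disguise: you say $L(w)$ is ``already defined \ldots within the same SCC via the inner induction'', but within an SCC every node is both predecessor and successor of every other, so this is circular unless the inner construction is already done --- which it is not.

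Even if one grants the existence of a $\preceq$-minimal element among the finitely many candidates $\{v\cdot L(w): (w,v)\in E\}$, you still owe the verification that the chosen walk is leftmost against \emph{all} walks to $v$, not just against this finite set; you never address this. And $\preceq$ as you define it is not a total preorder on arbitrary walks, so ``minimizing in $\preceq$'' needs justification already at the level of definitions (it can be made to work for walks generated by a single $p$, since once two such walks agree they agree forever, but you do not say this).

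For comparison, the paper does not use SCCs or induction at all. It runs a DFS; when a cycle $C$ is detected, it splits the uncommitted out-neighbours of $C$ into those that are $\le$-smaller (set $L$) and $\le$-larger (set $R$) than the cycle successor, proves that the graph reachable through $L$ is acyclic, and then sets $p$ via a multi-source \emph{longest} path computation on the $L$-side and a multi-source \emph{shortest} path computation on the $R$-side, processing neighbours in $\le$-order. The left/right split together with the longest/shortest dichotomy is the device that resolves precisely the ambiguity you ran into (which predecessor to pick when the smallest one is wrong). If you want to rescue your approach, you would need an analogous structural lemma inside an SCC; nothing in your outline supplies one.
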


Hereafter, we only consider the leftmost case since the rightmost is analogous. We prove Theorem~\ref{exleft} constructively, i.e., we give an algorithm, termed \emph{Forward Visit}, that computes the function $p$ on input a directed graph $G = (V,E)$ and a total order $\leq$ on $V$. At a very high level, the algorithm consists of DFS and BFS visits that alternate with each other. Here the DFS visit has the purpose of computing a cycle $C$. Once this cycle $C$ is computed, we start two BFS-like searches starting from $C$ on two different subgraphs $G_L$ and $G_R$ of $G$. These two BFS-like searches each compute walks starting from nodes in $C$ that when concatenated with $C$ form the leftmost infinite walks for all nodes on those walks. These walks will then be represented by the function $p$. More precisely, the BFS-like searches work as follows: Let $V'$ be nodes for which we have not computed a value for $p$ yet. The graph $G_L$ contains the nodes that can be reached in $G[V']$ from some node $u\in C$ through a neighbor node $v\notin C$ that is on the \emph{left} of $u$'s cycle successor with respect to the total order $\le$. The subgraph $G_R$ is defined symmetrically as reachable through \emph{right} neighbors. The BFS-like searches are then a \emph{multi-source shortest path} search in $G_R$ and a \emph{multi-source longest path} search in $G_L$. The intuition why we compute shortest paths for $G_R$ and longest paths for $G_L$ is as follows. Let $z\notin C$ be a node in $G_L$ that is reachable by a length-$d$ path from some cycle node $u\in C$ through a \emph{left neighbor} $v\notin C$ of $u$. Now, consider some other length-$d$ path from $u$ to $z$ that follows the cycle longer, i.e., goes through the successor of $u$ in the cycle. Such a walk is not left-most by definition and in fact our algorithm will never output such a walk as it is constructed using a strictly shorter path from $C$ to $z$ than the one going through $v$. By a symmetric argument nodes in $G_R$ should be connected to $C$ via shortest path.

% and $E_{\le}(S,T):=\{(u,v)\in E_(S,T): u\le v\}$.
% provide the following definitions for a fixed directed graph $G = (V,E)$.
% \begin{definition}[Induced subgraph]
%     For any $V' \subseteq V$, the induced subgraph of $G$ on $V'$, denoted by $G[V'] = (V', E')$, is the directed graph such that for each $(u,v) \in V' \times V'$, $(u,v) \in E'$ if and only if $(u,v) \in E$.
% \end{definition}

% \begin{definition}[Reachable subgraph]
%     For any $V' \subseteq V$, the reachable subgraph of $G$ from $V'$, denoted by $\delta_{G}(V') = (\hat{V},\hat{E})$, is the directed graph such that: (i)~For each $u \in V$, $u \in \hat{V}$ if and only if there exists a walk in $G$, $P_{u} = (u_{i})_{i=1}^{l}$, with $u_{l} \in V'$. (ii)~For each $(u,v) \in V \times V$, $(u,v) \in \hat{E}$ if and only if there exists a walk in $G$, $(u_{i})_{i=1}^{l}$, with $u_{1} = v$, $u_{2} = u$, and $u_{l} \in V'$.
% \end{definition}

% The image was here.
\begin{figure}[ht!]
\begin{center}
\resizebox{0.85\textwidth}{!}{
\begin{tikzpicture}[
dim/.style={minimum size=1.85em, font={\footnotesize}}, 
scale = 0.55,
cycle/.style={color=darkgreen},
r/.style={color=blue},
l/.style={color=darkred},
dots/.style={text centered}]
    \node[dots] (0) at (-1.5,-2.7) {$(a)$};

    \node[state, dim] (6) at (0,0) {$6$};
    \node[state, dim] (2) at (0,-1.8) {$2$};
    \node[state, dim] (11) at (0,-3.6) {$11$};
    \node[state, dim] (1) at (0,-5.4) {$1$};
    \node[state, dim] (3) at (1.7,0) {$3$};
    \node[state, dim] (5) at (1.7,-1.8) {$5$};
    \node[state, dim] (9) at (1.7,-3.6) {$9$};
    \node[state, dim] (16) at (1.7,-5.4) {$16$};
    \node[state, dim] (4) at (3.4,0) {$4$};
    \node[state, dim] (7) at (3.4,-1.8) {$7$};
    \node[state, dim] (13) at (3.4,-3.6) {$13$};
    \node[state, dim] (12) at (3.4,-5.4) {$12$};
    \node[state, dim] (10) at (5.1,0) {$10$};
    \node[state, dim] (8) at (5.1,-1.8) {$8$};
    \node[state, dim] (15) at (5.1,-3.6) {$15$};
    \node[state, dim] (14) at (5.1,-5.4) {$14$};

    \draw[-Stealth] (2) to [bend left = 30] (6);
    \draw[-Stealth] (6) to [bend left = 30] (2);
    \draw[-Stealth] (2) to (11);
    \draw[-Stealth] (2) to (5);
    \draw[-Stealth] (11) to (9);
    \draw[-Stealth] (5) to (11);
    \draw[-Stealth] (5) to (3);
    \draw[-Stealth] (9) to (5);
    \draw[-Stealth] (3) to (4);
    \draw[-Stealth] (5) to [bend left = 30] (7);
    \draw[-Stealth] (7) to [bend left = 30] (5);
    \draw[-Stealth] (9) to (13);
    \draw[-Stealth] (7) to (4);
    \draw[-Stealth] (7) to (13);
    \draw[-Stealth] (4) to (15);
    \draw[-Stealth] (4) to (10);
    \draw[-Stealth] (13) to (15);
    \draw[-Stealth] (15) to (8);
    \draw[-Stealth] (10) to (8);
    \draw[-Stealth] (1) to (9);
    \draw[-Stealth] (12) to (16);
    \draw[-Stealth] (16) to (1);
    \draw[-Stealth] (16) to (9);
    \draw[-Stealth] (12) to [bend left = 30] (14);
    \draw[-Stealth] (14) to [bend left = 30] (12);

    \begin{scope}[xshift=28em]

    \node[dots] (0) at (-1.5,-2.7) {$(b)$};

    \node[state, dim] (6) at (0,0) {$6$};
    \node[state, dim] (2) at (0,-1.8) {$2$};
    \node[state, dim, r] (11) at (0,-3.6) {$11$};
    \node[state, dim] (1) at (0,-5.4) {$1$};
    \node[state, dim, l] (3) at (1.7,0) {$3$};
    \node[state, dim, cycle] (5) at (1.7,-1.8) {$5$};
    \node[state, dim, r] (9) at (1.7,-3.6) {$9$};
    \node[state, dim] (16) at (1.7,-5.4) {$16$};
    \node[state, dim, l] (4) at (3.4,0) {$4$};
    \node[state, dim, cycle] (7) at (3.4,-1.8) {$7$};
    \node[state, dim, r] (13) at (3.4,-3.6) {$13$};
    \node[state, dim] (12) at (3.4,-5.4) {$12$};
    \node[state, dim, l] (10) at (5.1,0) {$10$};
    \node[state, dim, l] (8) at (5.1,-1.8) {$8$};
    \node[state, dim, l] (15) at (5.1,-3.6) {$15$};
    \node[state, dim] (14) at (5.1,-5.4) {$14$};

    \draw[-Stealth] (2) to [bend left = 30] (6);
    \draw[-Stealth] (6) to [bend left = 30] (2);
    \draw[-Stealth] (2) to (11);
    \draw[-Stealth] (2) to (5);
    \draw[-Stealth, r] (11) to (9);
    \draw[-Stealth, r] (5) to (11);
    \draw[-Stealth, l] (5) to (3);
    \draw[-Stealth, r] (9) to (5);
    \draw[-Stealth, l] (3) to (4);
    \draw[-Stealth, cycle] (5) to [bend left = 30] (7);
    \draw[-Stealth, cycle] (7) to [bend left = 30] (5);
    \draw[-Stealth, r] (9) to (13);
    \draw[-Stealth, l] (7) to (4);
    \draw[-Stealth, r] (7) to (13);
    \draw[-Stealth, l] (4) to (15);
    \draw[-Stealth, l] (4) to (10);
    \draw[-Stealth] (13) to (15);
    \draw[-Stealth, l] (15) to (8);
    \draw[-Stealth, l] (10) to (8);
    \draw[-Stealth] (1) to (9);
    \draw[-Stealth] (12) to (16);
    \draw[-Stealth] (16) to (1);
    \draw[-Stealth] (16) to (9);
    \draw[-Stealth] (12) to [bend left = 30] (14);
    \draw[-Stealth] (14) to [bend left = 30] (12);
        
    \end{scope}

    \begin{scope}[xshift=56em]

    \node[dots] (0) at (-1.5,-2.7) {$(c)$};

    \node[state, dim] (6) at (0,0) {$6$};
    \node[state, dim] (2) at (0,-1.8) {$2$};
    \node[state, dim] (11) at (0,-3.6) {$11$};
    \node[state, dim] (1) at (0,-5.4) {$1$};
    \node[state, dim] (3) at (1.7,0) {$3$};
    \node[state, dim] (5) at (1.7,-1.8) {$5$};
    \node[state, dim] (9) at (1.7,-3.6) {$9$};
    \node[state, dim] (16) at (1.7,-5.4) {$16$};
    \node[state, dim] (4) at (3.4,0) {$4$};
    \node[state, dim] (7) at (3.4,-1.8) {$7$};
    \node[state, dim] (13) at (3.4,-3.6) {$13$};
    \node[state, dim] (12) at (3.4,-5.4) {$12$};
    \node[state, dim] (10) at (5.1,0) {$10$};
    \node[state, dim] (8) at (5.1,-1.8) {$8$};
    \node[state, dim] (15) at (5.1,-3.6) {$15$};
    \node[state, dim] (14) at (5.1,-5.4) {$14$};

    \draw[-Stealth] (2) to [bend left = 30] (6);
    \draw[-Stealth] (6) to [bend left = 30] (2);
    \draw[-Stealth] (11) to (9);
    \draw[-Stealth] (5) to (11);
    \draw[-Stealth] (5) to (3);
    \draw[-Stealth] (3) to (4);
    \draw[-Stealth] (5) to [bend left = 30] (7);
    \draw[-Stealth] (7) to [bend left = 30] (5);
    \draw[-Stealth] (7) to (13);
    \draw[-Stealth] (4) to (15);
    \draw[-Stealth] (4) to (10);
    \draw[-Stealth] (10) to (8);
    \draw[-Stealth] (12) to (16);
    \draw[-Stealth] (16) to (1);
    \draw[-Stealth] (12) to [bend left = 30] (14);
    \draw[-Stealth] (14) to [bend left = 30] (12);
    \end{scope}
\end{tikzpicture}
}
\end{center}
\vspace{-3mm}
\caption{(a) A directed graph $G = (V,E)$ and a total order $\leq$ over $V$ represented by the integer names of nodes. (b) In green the first cycle $C$ that is found by the DFS of Algorithm~\ref{alg:6:left}, if we start a DFS from node $2$; in red and blue the subgraphs $G_{L}$ and $G_{R}$ corresponding to $C$, respectively. Here, $L = \{3, 4\}$ and $R = \{11, 13\}$. (c) Leftmost walks represented by $p$ (indicated by the shown edges).
% \vspace{-15pt}
}
\label{fig:6:left_ex}
\end{figure}
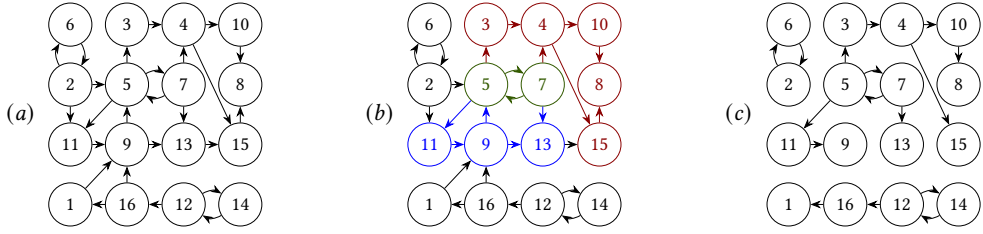

% \vspace{-5pt}
\subparagraph{The Forward Visit Algorithm.}
We proceed with a detailed description of the algorithm, a pseudocode implementation can be found in Algorithm~\ref{alg:6:left}. 
For each $u \in V$, the algorithm maintains three values, $p(u)$ (initially $\nullval$), $u.color$ (initially white), and $u.\nextval$ (initially $\nullval$). The algorithm iterates over the nodes in $V$ in an arbitrary order. For each $u \in V$ with $u.color$ = white, the algorithm starts a DFS visit from $u$. This DFS first changes the color of $u$ to gray, and eventually to black once the DFS from $u$ terminated. The DFS visit from $u$ is a classical DFS up to two caveats. (1)~For a node $u\in V$, its neighbors $N(u)$ are processed in increasing order with respect to $\leq$. (2)~The color $v.color$ of a neighbor $v\in N(u)$ determines our action when processing $v$ as follows: (a)~if $v.color =$ black, the node $v$ is ignored, (b)~if $v.color = $ white, we process the node normally, i.e., we launch a new DFS from $v$ and set $u.\nextval = v$, and (c)~if $v.color =$ gray, we still set $u.\nextval = v$ but we temporarily interrupt the DFS from $u$. Case~(c) implies that the DFS from $u$ found a cycle $C$ in $G$ that can be reconstructed from the $\nextval$-values. We now call the function \texttt{ComputeWalks} with input $C$.

\SetKwRepeat{Do}{do}{while}
\begin{algorithm}[!ht]
\LinesNumbered
\DontPrintSemicolon
\caption{Forward Visit}\label{alg:6:left}
    \Input{directed graph $G=(V,E)$, total order $\leq$ over $V$}
    \Output{function $p : V \rightarrow V$ s.t.\ $(p^i (u))_{i\ge 0} $ is a leftmost walk for every $ u \in V $}

    \smallskip

    \Fn{\ComputeWalks{$C$}}{
    
        $V' \gets \{v \in V : p(v) = \nullval \}$\;
        $R \gets \{v \in V' : \exists(u,v) \in E(C)\text{ and } v > u.\nextval\}$ and $E_{C,R} \gets E(C, R)$ \;
        $L \gets \{v \in V' : \exists(u,v) \in E(C) \text{ and } v < u.\nextval\}$ and $E_{C,L}\gets E(C, L)$\;
        \smallskip

        $G_R \gets (C, E_{C, R}) \cup \delta_{G[V']}(R)$ and $G_L \gets (C, E_{C, L}) \cup \delta_{G[V']}(L)$\;\label{subgraphs def}
        \smallskip

        \MultiSourceShortestPath{$G_{R}, C$}
        \tcp*{\normalfont{sets $p(v)$ for nodes $v$ in $G_R$ s.t.\ $v\notin C$}} \label{line: shortest path}
        \MultiSourceLongestPath{$G_{L}, C$}
        \tcp*{\normalfont{sets $p(v)$ for nodes $v$ in $G_L$ s.t.\ $v\notin C$}}
        \smallskip

        \lForEach{$v \in C$}{$p(v) \gets u$ where $u \in C$ s.t. $u.\nextval = v$} \label{line: cycle p}
        \lForEach{node $u$ in $G_{L}$ and $G_{R}$}{$u.color \gets$ black} \label{line: color black}
    }

    \smallskip
    
    \Fn{\DFS{u}}{
        $u.color \gets$ gray\; 

        \ForEach{$v \in N(u)$ in increasing order w.r.t.\ $\leq$} {        
            \lIf{$v.color =$ white}{$u.next \gets v$, \DFS{v}} \label{line: dfs call}
            \ElseIf{$v.color =$ gray}{
                $u.next \gets v$, $\bar{u} = u$, and $C \gets \emptyset$\;
                \textbf{do}$\,$ $C$.append($u$), $u \gets u.next$$\,$  \textbf{while}$\,$  $\bar{u} \neq u$ \tcp*{\normalfont{build cycle $C$}}
                % \lDo{$\bar{u} \neq u$}{
                %     $C$.append($u$), $u \gets u.next$\;
                % }
                \ComputeWalks{$C$}
        }}
        $u.color \gets$ black\;\label{DFS sets black}
    }
    $p(u)\gets$ null, $u.color \gets$ white, and $u.next \gets$ null for all $u\in V$\tcp*{\normalfont{initialization}}
    \lForEach{$u \in V$ such that $u.color =$ white}{\DFS{u}}\label{line: all dfs}
\end{algorithm}

Let us denote with $V'\coloneqq \{v \in V : p(v) = \nullval \}$ all the nodes for which we have not yet constructed a leftmost walk. The function \texttt{ComputeWalks} computes two (possibly overlapping) subgraphs $G_R$ and $G_L$ of $G[V']$ and constructs walks from $C$ to the nodes in these two graphs. The graphs $G_R$ and $G_L$ are defined as follows.
%deleted refernce here
First we compute two (possibly overlapping) subsets $R$ and $L$ of $N:=N(C)\cap V'$ as follows. We let $R \coloneqq \{v \in V' : \exists(u,v) \in E(C)\text{ and } v > u.\nextval\}$ and $L \coloneqq \{v \in V' : \exists(u,v) \in E(C) \text{ and } v < u.\nextval\}$, as well as $E_{C,R} \coloneqq E(C, R)$ and $E_{C,L}\coloneqq E(C, L)$. In other words, the set $R$ (the set $L$) consists of those neighbors in $V'$ of nodes $u\in C$ that are larger (smaller) than $u.\nextval$ with respect to $\le$. We then define $G_R = (V_R, E_R) := (C, E_{C, R}) \cup \delta_{G[V']}(R)$ and $G_L = (V_L, E_L) := (C, E_{C, L}) \cup \delta_{G[V']}(L)$.
In order to compute the function $p(u)$ for nodes $u \in V_R \cup V_L \setminus C$, we now launch two BFS-like visits in a specific (and essential) order. We first launch a \emph{multi-source shortest path} search in $G_{R}$ and then a \emph{multi-source longest path} search in $G_L$, in both cases from $C$. In both of these subroutines we set $p(v)=u$ whenever $v$ is discovered to be the next node on a shortest (respectively longest) path starting from the cycle. In both of these BFS-like visits, we process a node's neighbors in increasing order with respect to $\le$ (recall that we are searching for leftmost walks), we give more details in the paragraph below. It is essential that we first run the shortest path search on $G_R$ and then the longest path search on $G_L$ as the two graphs possibly overlap and we hence reset the $p$-values for nodes that are in both graphs, prioritizing paths in $G_L$. 
We then compute $p(u)$ also for the nodes $u \in C$ by setting $p(u) = v$ where $v\in C$ is such that $v.\nextval = u$.
Lastly, we set $u.color = $ black for each $u\in V_{L} \cup V_{R}$. Then the function \texttt{ComputeWalks} terminates and the DFS is resumed. We refer the reader to Figure~\ref{fig:6:left_ex} for an example run of the algorithm.

\subparagraph{Details on the BFS searches.}
The multi-source shortest path search is implemented as a classical BFS on $G_R$ with a queue initially containing the nodes $C$. When a node $u$ is dequeued, nodes $v\in N(u) \cap V_{R}$ are processed in increasing order with respect to $\leq$. If $v$ is newly discovered by the BFS, we set $p(v) = u$ and enqueue $v$.

The multi-source longest path search is implemented as follows. It is essential that $G_L$ is acyclic, see Lemma~\ref{obsacyclic}~($i$). Hence, we can simply compute the maximum distance $d(C,u)$ from $C$ to every node $u$ in $G_{L}$ by traversing $G_{L}$ in a topological order. We then perform a BFS-like search on $G_{L}$ with a queue initially containing the nodes $C$. When a node $u$ is dequeued, nodes $v\in N(u) \cap V_{L}$ are processed in increasing order with respect to $\leq$. If $v$ is newly discovered by the BFS and $d(C,v) = d(C,u) + 1$, we set $p(v) = u$ and enqueue $v$. 

\subparagraph{Analysis.}
We start with the following simple observation.
%\carlo{Thus, for any $u \in V$ for which $p(u) \neq \nullval$, the proposition $u.color$ = black holds, consequently a new invocation of \texttt{DFS($u$)} will not initiate}.
\begin{restatable}[]{observation}{obsiii}\label{dummyleftobs}
    \normalfont{($i$)} At termination of \normalfont{\texttt{ComputeWalks($C$)}} it holds that $p(u) \neq \nullval$ for every $u \in V$ that can be reached from $C$. 
    \normalfont{($ii$)} Before and after each run of \normalfont{\texttt{ComputeWalks}}, if $(u_{i})_{i=1}^{l}$ is a walk in $G$ and $p(u_{j}) \neq \nullval$ for some $j \in [l]$, then $p(u_{1}) \neq \nullval$. 
\end{restatable}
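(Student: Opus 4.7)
My plan is to prove the two parts together by induction on the number of completed \texttt{ComputeWalks} invocations, first recasting (ii) in an equivalent and more intuitive form: the set $D := \{v \in V : p(v) \neq \nullval\}$ is \emph{forward-closed} under the edge relation of $G$, i.e.\ whenever $u \in D$ and $u \to v$ in $G$, also $v \in D$. This equivalence follows from the contrapositive of (ii) together with the paper's walk convention, in which $u_1$ is the endpoint of the walk and $u_l$ its start, so (ii) says precisely that once $D$ is entered along a walk the endpoint is in $D$. The base case is immediate since $D = \emptyset$ at initialization.

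For the inductive step I would fix an invocation \texttt{ComputeWalks}($C$), assume (ii) holds right before it, establish (i) for this call, and then use (i) to verify that (ii) is preserved. Before attacking either claim I would record two structural facts about the cycle $C$: (a) every $c \in C$ is currently gray with $p(c) = \nullval$, so $C \subseteq V'$; and (b) for every $c \in C$, the pointer $c.\nextval$ points to another node of $C$. Fact~(a) holds because a node becomes black only inside \texttt{ComputeWalks} or at the very end of its own \DFS, whereas all nodes of $C$ are on the current DFS stack. Fact~(b) follows from the \emph{do-while} loop that assembles $C$ as a closed $\nextval$-chain from $\bar u$, together with the observation that none of the intermediate DFS calls that made the cycle's stack nodes gray has returned yet, so no subsequent neighbor could have overwritten any of the relevant $\nextval$ pointers.

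For (i) I would take $u$ reachable from $C$; the case $u \in D$ is trivial, and otherwise $u \in V'$, so I pick a walk $c = w_0 \to \cdots \to w_l = u$ with $c \in C$. The inductive hypothesis (that $D$ is forward-closed) combined with $u \in V'$ forces every $w_i$ to lie in $V'$, so the entire walk stays inside $G[V']$. If $u \in C$ the claim is discharged by line~\ref{line: cycle p}; otherwise let $w_{i_0}$ be the first index with $w_{i_0} \notin C$. Fact~(b) then gives $w_{i_0} \neq w_{i_0-1}.\nextval$, hence $w_{i_0} \in R \cup L$, and a short induction along the tail shows that every $w_{i_0}, \dots, w_l$ lies in $\delta_{G[V']}(R) \cup \delta_{G[V']}(L) \subseteq V_R \cup V_L$. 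Since by construction every non-$C$ node of $V_R \cup V_L$ is reachable from $C$ inside $G_R$ or $G_L$, the two BFS-like searches assign it a $p$-value.

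To finish (ii), I would write $D' = D \cup V_R \cup V_L$ (justified by (i) and the fact that $p$-values are never unset) and verify forward-closedness for every edge $u \to v$ with $u \in D'$ by cases. When $u \in D$ the inductive hypothesis gives $v \in D$; when $u \in C$, either $v \in D$, or $v \in C \subseteq D'$, or $v \in V' \setminus C$, in which case fact~(b) forces $v \neq u.\nextval$ and hence $v \in R \cup L \subseteq D'$; when $u \in V_R \setminus C$ (and symmetrically $V_L \setminus C$), $u$ lies in $V'$ and is reachable from $R$ inside $G[V']$, so for $v \in V'$ the edge $u \to v$ extends the reachability walk and places $v$ in $V_R$, while $v \in D$ is again immediate. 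The main subtlety I expect is precisely the mutual dependence between (i) and (ii): proving (i) needs (ii) before the call, and preserving (ii) needs to know which nodes just gained a $p$-value, which is exactly (i). Ordering the two arguments inside a single inductive step, as above, is what dissolves this apparent circularity.
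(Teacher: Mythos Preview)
Your proposal is correct and follows essentially the same approach as the paper's own proof: both rest on the observation that $R\cup L$ captures all of $N(C)\cap V'$ not already in $C$, so that $G_L\cup G_R$ absorbs every node of $V'$ reachable from $C$, which yields (i) directly and (ii) by closing the reachable set under out-edges. The paper compresses this into two sentences and leaves the induction and the structural facts~(a) and~(b) implicit, whereas you make the inductive structure, the forward-closedness reformulation of (ii), and the role of the $\nextval$-chain explicit; the added rigor is welcome but does not constitute a different route.
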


\begin{proof}
    ($i$)~Let $V' = \{u \in V : p(u) = \nullval\}$ as in the algorithm. \texttt{ComputeWalks($C$)} inserts every node $u \in V'$ that can be reached from $C$ to the subgraphs $G_{L}$ or $G_{R}$. Then it computes function $p$ for all and only nodes in $G_{L} \cup G_{R}$.
    % \sout{the only place where the color-attribute of nodes is set to black or the $p$-values are modified are lines~\ref{line: shortest path} to~\ref{line: color black}. Hence, the statement trivially holds before the first call to  \texttt{ComputeWalks}. Thereafter, \texttt{ComputeWalks} only sets $p$ values for nodes in $G_L$ and $G_R$ and all these nodes are labeled black in line~\ref{line: color black}.}
    ($ii$)~When the DFS outputs a cycle $C$, all nodes $u$ that are reachable from $C$ and that satisfy $p(u) =$ null are added to one of the subgraphs $G_{L}$ or $G_{R}$. This holds as $R\cup L = N(C)\cap V'$, where $V'$ were all nodes $v$ with $p(v)=$ null. Hence \texttt{ComputeWalks} always calculates $p(u)$ for all and only those nodes in $G_{L}$ and $G_{R}$.
\end{proof}
We proceed with the lemma about the acyclicity of $G_L$.
\begin{restatable}[]{lemma}{lemiii}\label{obsacyclic}
    Let $G_{L} = (V_{L}, E_{L})$ and $C$ be the subgraph of $G = (V,E)$ and the cycle computed during an arbitrary execution of \normalfont{\texttt{ComputeWalks}} in Forward Visit. Then, \normalfont{($i$)}~$G_{L}$ is acyclic and \normalfont{($ii$)} it holds that $v \notin C$ for each $v$ such that there exists $(u,v) \in E_{L}$.
\end{restatable}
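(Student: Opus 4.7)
The plan is to prove (ii) first and then derive (i) from it via a standard DFS finish-time argument. Part (ii) says that every edge in $E_L = E_{C,L} \cup E(\delta_{G[V']}(L))$ has its target outside $C$. The high-level reason this should hold is that members of $L$ are forward-neighbors $v$ of some $u \in C$ with $v < u.\nextval$, so the increasing-order DFS discipline has forced DFS($u$) to visit $v$ strictly before the current cycle was detected via $u.\nextval$; combined with $v \in V'$, this will force $v$ to be currently black rather than gray, and hence $v \notin C$.

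For the first case $(u,v) \in E_{C,L}$, I would do a case distinction on $v$'s color at the moment DFS($u$) processed it. If $v$ was gray, DFS($u$) would have detected a cycle on $v$ first and the resulting ComputeWalks call would have assigned $p(v)$, contradicting $v \in L \subseteq V'$. If $v$ was white, the recursive DFS($v$) ran to completion, so $v$ is now black. If $v$ was already black, it remains black. Since the members of $C$ that lie in $V'$ must still be currently gray (a black node in $C$ must have been colored black by some prior ComputeWalks and hence has $p$ set, placing it outside $V'$), we conclude $v \notin C$. For the second case $(u,v) \in \delta_{G[V']}(L) \setminus E_{C,L}$, I would extend this by induction along a path from some $v_0 \in L$ to $v$ inside $G[V']$: $v_0$ is black by the previous step, and the implication ``$x$ black with $x,y \in V'$ and $(x,y) \in E$ $\Rightarrow$ $y$ is black'' follows from the identical color analysis applied inside DFS($x$), where the gray case is incompatible with $y \in V'$. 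Hence $v$ is black and in particular $v \notin C$.

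For (i), I would argue by contradiction: if $G_L$ contained a cycle $w_1 \to \cdots \to w_k \to w_1$, then by (ii) no $w_i$ lies in $C$, so every cycle edge lies in $\delta_{G[V']}(L)$, and by the inductive blackness step above every $w_i$ is black. Now invoke the classical DFS fact that for an edge $(x,y) \in E$ with $x,y$ both black one has $f(y) < f(x)$, where $f$ denotes the DFS finish time: whichever color $y$ had when DFS($x$) processed it, DFS($y$) must have completed before DFS($x$) did. Going around the supposed cycle yields the impossible chain $f(w_1) > f(w_2) > \cdots > f(w_k) > f(w_1)$. The main obstacle I anticipate is the bookkeeping in case (a): nested cycle detections within a single DFS call can color some members of the current $C$ black, so some care is needed to confirm that the relevant $v$'s are gray at the relevant time and that the ``earlier cycle detection would set $p(v)$'' argument is genuinely available in every configuration.
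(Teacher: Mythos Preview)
Your proof is correct and takes a genuinely different route from the paper's.

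The paper treats (i) and (ii) simultaneously: assuming either fails, it produces a node $z\in V_L$ lying on some cycle $C'\neq C$ in $G$, and then argues that when the DFS at the relevant cycle node $u'$ processed its $L$-neighbour $v$ (which happened strictly before $u'.\nextval$ by the increasing-order discipline), the exploration from $v$ would have hit a cycle reaching $z$. The resulting earlier call to \texttt{ComputeWalks} would already have assigned $p(z)$, contradicting $z\in V'$.

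You instead establish (ii) directly via a colour invariant---every node in $L$, and inductively every node of $\delta_{G[V']}(L)$, is currently black, the gray case being excluded because it would have forced $p(v)\neq\nullval$---and then note that $C\cap V'$ consists solely of gray nodes (they are on the call stack, so line~\ref{DFS sets black} has not fired, and membership in $V'$ rules out blackening by an earlier \texttt{ComputeWalks}). Part (i) then drops out of the standard finish-time argument, which is legitimate here precisely because every node on a hypothetical $G_L$-cycle lies in $V'$ and is black via line~\ref{DFS sets black}, so the finish times are well-defined.

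One small imprecision worth fixing: in your Case~1 you analyse ``the moment \texttt{DFS}($u$) processed $v$'' where $u$ is the source of the particular edge $(u,v)\in E_{C,L}$. But $E_{C,L}=E(C,L)$ carries no constraint $v<u.\nextval$; the witness $u'\in C$ with $v<u'.\nextval$ comes from the definition of $L$ and may differ from $u$. Your colour analysis should be run at $u'$, not $u$; with that substitution everything you wrote goes through verbatim. Your anticipated ``bookkeeping obstacle'' is not a real obstacle: any node blackened by a nested \texttt{ComputeWalks} has $p\neq\nullval$ and hence sits outside $V'$, so the claim that $C\cap V'$ is gray is safe.

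As to what each approach buys: yours is more modular and leans on textbook DFS invariants (colours, finish times), making each step easy to check in isolation; the paper's is more compact, tying both conclusions to a single structural observation about cycles reachable from $L$, at the cost of slightly heavier reasoning about the interleaving of the DFS with \texttt{ComputeWalks}.
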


\begin{proof}
    Let $L \subseteq V_{L}$ be as in the algorithm. Suppose for the purpose of contradiction that~($i$) or~($ii$) does not hold. In both cases there exists $z \in V_{L}$ such that $z \in C'$ for some cycle $C'$ in $G$ different from $C$. Let $(u,v) \in  E(C,L)$ be the edge such that $v$ can reach $z$ in $G_L$. As $z \in V_L$, we know that $p(z) = \nullval$ and, by Observation~\ref{dummyleftobs}~($ii$), it follows that $p(\bar{z}) = \nullval$ for each $\bar{z} \in C'$, and, for an analogous argument, also the nodes $\bar v$ that can reach $C'$ must satisfy $p(\bar v) \neq \nullval$. Thus, at this point of the algorithm execution, the function \texttt{ComputeWalks} cannot have colored those nodes black that can reach $C'$, as it colors exactly the nodes for which it computes the function $p$ (i.e., the nodes in $G_{L} \cup G_{R}$). The only remaining part in which Forward Visit may color a node $\bar{v}$ black that can reach $C'$ is line~\ref{DFS sets black}, i.e., when \texttt{DFS($\bar{v}$)} has terminated. Notice however that $\bar v$ can trivially reach $C'$ and thus \texttt{DFS($\bar{v}$)} has to find a cycle $C''$ that can reach the cycle $C'$ in $G$ before terminating (possibly $C''=C'$). Thus, by Observation~\ref{dummyleftobs}~($i$), when \texttt{DFS($u$)} is launched, \texttt{DFS($\bar{v}$)} has not terminated yet, otherwise $p(z) \neq \nullval$ would hold. When \texttt{DFS($u$)} was started, the nodes in $N(u)$ were processed in increasing order with respect to\ $\leq$. In addition, as $v < u.\nextval$ by definition of $L$, it follows that $v$ was visited by \texttt{DFS($u$)} before $u.\nextval$. Finally, since $v$ can reach $z$, it follows that \texttt{DFS($z$)} has to terminate before \texttt{DFS($v$)}. As for each node $\bar v$ reaching $C$ the function \texttt{DFS($\bar{v}$)} has not terminated yet, it holds that \texttt{DFS($v$)} has to output a cycle $C''$ (possibly $C'' = C'$) different from $C$ before terminating, contradicting the assumption that $C$ is the output cycle.
\end{proof}

We next observe that Forward Visit indeed computes a complete function $p$ on $V$.
\begin{restatable}[]{observation}{obsv}\label{obs: p complete}
    Upon termination Forward Visit has computed $p(u)$ for every node $u \in V$. 
\end{restatable}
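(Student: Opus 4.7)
The plan is a proof by contradiction: I assume that the set $V^* := \{v \in V : p(v) = \nullval \text{ at termination of Forward Visit}\}$ is non-empty and derive a contradiction.

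The first step is to establish that $V^*$ is closed under in-neighbors in $G$: if $v \in V^*$ and $(w, v) \in E$, then $(v, w)$ is a length-two walk in $G$, and Observation~\ref{dummyleftobs}~($ii$) would force $p(v) \neq \nullval$ whenever $p(w) \neq \nullval$, so $w \in V^*$ as well. Combined with the standing assumption that every node of $G$ has at least one incoming edge, every node of $G[V^*]$ has an in-neighbor in $G[V^*]$; equivalently, $V \setminus V^*$ is closed under out-neighbors, which in particular forbids any cycle of $G$ from meeting both $V^*$ and $V \setminus V^*$. A standard pigeonhole argument inside the finite graph $G[V^*]$ then produces a cycle $C^* \subseteq V^*$. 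Note further that whenever Forward Visit discovers a cycle $C$, the call \texttt{ComputeWalks($C$)} assigns a non-null $p$-value to every $v \in C$ via the loop $p(v) \gets u$ with $u.\nextval = v$; hence \emph{no discovered cycle can contain any $V^*$-node}.

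Next I would record two facts about the DFS behavior restricted to $V^*$. First, every $V^*$-node is eventually colored black, and since \texttt{ComputeWalks} never blackens a $V^*$-node (that would set its $p$-value), every $V^*$-node $v$ is blackened by DFS line~\ref{DFS sets black} at the termination of its own DFS call, and therefore has a well-defined DFS completion time $T_v$. Second, because the DFS parent of any node is an in-neighbor of that node and $V^*$ is in-neighbor-closed, every DFS-tree ancestor of a $V^*$-node is itself in $V^*$.

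The heart of the proof, and what I expect to be the main obstacle, is the following finishing argument. Let $a \in C^*$ minimize $T_a$ among the nodes of $C^*$, and let $b$ be the cyclic successor of $a$ in $C^*$ (so $(a,b) \in E$ and $b \in V^*$). I would then examine the color of $b$ at the moment DFS($a$) processes $b$ at the top level of its for-loop, and show that each of the three possible colors leads to a contradiction. If $b$ is white, DFS($a$) calls DFS($b$), which completes before DFS($a$), giving $T_b < T_a$ and violating minimality. If $b$ is black, DFS($b$) has already completed by this moment, which again yields $T_b < T_a$. Finally, if $b$ is gray, then $b$ must be a DFS-tree ancestor of $a$, because any prior sub-DFS of DFS($a$)'s for-loop has already completed and so no descendant of $a$ is currently gray. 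The back-edge cycle that the DFS then detects consists of the DFS-ancestral chain from $a$ up to $b$ together with the closing edge $(a,b)$; but by the second fact every DFS ancestor of $a$ lies in $V^*$, so the detected cycle is entirely contained in $V^*$, contradicting the key incompatibility that no discovered cycle contains a $V^*$-node. Thus $V^*$ must be empty, which is what we wanted to show.
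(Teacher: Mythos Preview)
Your proof is correct. Both your argument and the paper's hinge on the same core observation --- a cycle consisting entirely of nodes with $p=\nullval$ would force the DFS to detect a cycle and hence \texttt{ComputeWalks} to assign $p$-values --- but you organize it differently. The paper argues directly for a fixed $u$: it picks a cycle $C_u$ that reaches $u$ (the paper writes ``containing $u$'', a small slip), takes the first $C_u$-node $v$ on which DFS is started, notes all of $C_u$ is still white at that moment, and asserts that DFS($v$) must discover a cycle through $v$ before terminating. Your global contradiction via the set $V^*$, together with the minimum-finish-time node $a\in C^*$ and the three-color case analysis on its cycle successor $b$, is a cleaner and more standard DFS argument; in particular it makes fully explicit why a cycle \emph{inside} $V^*$ is actually detected, whereas the paper's ``cannot terminate without finding a cycle $C$ with $v\in C$'' leaves that step somewhat implicit. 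The two auxiliary facts you isolate (in-neighbor closure of $V^*$, and that DFS ancestors of $V^*$-nodes remain in $V^*$) are precisely what is needed to conclude, in the gray case, that the detected cycle lies entirely in $V^*$.
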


\begin{proof}
    Let $u \in V$. By assumption, every node has an incoming edge and thus there exists a cycle $C_{u}$ in $G$ containing $u$.
    If before or after any run of \texttt{ComputeWalks} there exists $v \in C_{u}$ with $p(v) \neq \nullval$, then Observation~\ref{dummyleftobs}~($ii$) shows that also $p(u) \neq \nullval$.
    Otherwise, before and after every run of \texttt{ComputeWalks} it holds that $p(v) = \nullval$ for all $v \in C_{u}$.
    In this case, \texttt{ComputeWalks} has never colored the nodes in $C_u$ black, as by line~\ref{line: color black}, it colors black all and only those nodes for which it has computed the function $p$.
    By line~\ref{line: all dfs}, the algorithm starts a \texttt{DFS} visit from every node in $V$ (either in line~\ref{line: dfs call} or in line~\ref{line: all dfs}).
    Therefore, let $v$ be the first node of $C_u$ for which a \texttt{DFS} visit is started.
    As for each $z \in C_u$ it holds that $p(v) = \nullval$, before \texttt{DFS($v$)} starts $z.color = $ white for each $z \in C_u$. However, since $v \in C_u$ the \texttt{DFS} visit cannot terminate without finding a cycle $C$ with $v \in C$. Thus, since $v$ can reach $u$, by property~(i) of Observation~\ref{dummyleftobs}, we can conclude that the next run of \texttt{ComputeWalks} calculates $p(u)$.
\end{proof}

We are now ready to prove Theorem~\ref{exleft}.
% \vspace{-5pt}
\begin{proof}[Proof of Theorem~\ref{exleft}]
    By Observation~\ref{obs: p complete}, it holds that $p(v)\neq $ null for all $v\in V$ upon termination of the algorithm. It remains to prove that $P_{u}=(p(u)^{i})_{i\geq0}$ is a leftmost walk to $u$ according to $\leq$ for any node $u \in V$. Let $u_{i+1} \coloneqq p(u)^{i}$ for $i \geq 0$ (i.e., $u_0 = u$) and suppose for the purpose of contradiction that $P_{u}$ is not a leftmost walk to $u$. Then by Definition~\ref{def:leftmost} there exists a walk $P_{u}' = (\bar{u}_{i})_{i\geq 0}$ to $u$ such that $\bar{u}_{j} = u_{j}$ for some $j>1$ and $\bar{u}_{j-1} < u_{j-1}$. Let $j$ be minimal with that property. Furthermore, let $k$ with $0 \leq k < j -1$ be maximal such that $\bar{u}_{k} = u_{k}$ (such $k$ exists as $\bar{u}_{0} = u_{0} = u$). Consider now the execution of \texttt{ComputeWalks} that has computed $p(u)$ and let the cycle $C$ and the subgraphs $G_{L}=(V_L, E_L)$ and $G_{R}=(V_R, E_R)$ be the instances of those objects in that execution of the function. By Observation~\ref{dummyleftobs}~($ii$), at the beginning of this execution $p(v) = $ null for each $v$ in $P_{u}$ and $P_{u}'$. Consider the walk $\hat P_{u_{k}} = (\hat{u}_{i})_{i\geq k}$ to $u_{k}$, where $\hat{u}_{i} = \bar{u}_{i}$ for $i$ with $k \leq i \leq j$ and $\hat{u}_{i} = u_{i}$ for $i \geq j$. We distinguish three cases: (1)~ $u_{k} \in C$, (2)~$u_{k} \in V_{L} \setminus C$, and (3)~$u_{k} \in V_{R} \setminus (C \cup V_L)$.
    
    (1) Let $u_{k} \in C$. Then, $u_i\in C$ for all $i\ge k$  from how the algorithm sets $p$ in line~\ref{line: cycle p}. Hence, $\hat{u}_{i} \in C$ for all $i \geq j$. Now, consider the largest $m$ with $k \leq m < j$ such that $\hat{u}_{m} = \hat{u}_{m'}$ for some $m < m'$. By assumption, this integer $m$ exists since $k < j$ and $\hat{u}_{k} \in C$. Thus, if $m'$ is the smallest integer satisfying this property, the sequence $C'=\hat{u}_{m}, \hat{u}_{m+1}, \ldots, \hat{u}_{m'}$ is a cycle different from $C$ in $G$.     
    Since $\hat{u}_{j-1} < u_{j-1}$, the DFS has to visit $\hat{u}_{j-1}$ before $u_{j-1}$ and as $\hat{u}_{j-1}$ can reach $\hat{u}_{m}$ and consequently the cycle $C'$, it contradicts that the DFS output $C$. 
    
    (2)~Now, let $u_{k} \in V_{L} \setminus C$ and consider the smallest $h$ with $h > k$ such that $u_{h} \in C$ and observe that then also $u_{i}\in C$ for $i\ge h$. Observe that $P_{u_{k}}' = (u_{i})_{i=k}^{h}$ is a longest path from $C$ to $u_{k}$ in $G_{L}$. We distinguish two sub-cases. 
    (a)~Suppose that $h < j$. Then $u_{j-1} \in C$ and $\hat{u}_{j-1} \in L$ as $\hat{u}_{j-1} < u_{j-1}$. This implies $(\hat{u}_{i+1},\hat{u}_{i}) \in E_{L}$ for each $i$ with $k \leq i < j$ by the definition of $E_L$. By Lemma~\ref{obsacyclic}~($i$), the graph $G_{L}$ is acyclic and thus $(\hat{u}_{i})_{i=k}^{j}$ is a path from $C$ to $\hat{u}_{k}$ in $G_{L}$ that is longer than $P_{u_{k}}'$, a contradiction. 
    (b)~Now assume that $h \geq j$. This implies $\hat{u}_{j-1} \in V_{L} \setminus C$ and trivially also $\hat{u}_{i} \in V_{L}$ for $i$ with $k \leq i < j$. Lemma~\ref{obsacyclic}~($ii$) implies $\hat{u}_{i} \in V_{L} \setminus C$ and hence $h$ is also minimal such that $\hat{u}_{h} \in C$. 
    Due to the previous considerations $d(C, \hat{u}_{i}) = d(C, u_{i})$ for each $i$ with $k \leq i < j$, where $d$ is as in the algorithm. 
    As the multi-source longest path search processes the neighbors of $u_j$ in increasing order with respect to $\leq$ and since $\hat{u}_{j-1} < u_{j-1}$, this contradicts $p(u_{k}) = u_{k+1}$. 
    
    (3) Now assume $u_{k} \in V_{R} \setminus (C \cup V_L)$. Note that $\hat{u}_{k'} \notin V_{L} \setminus C$ for all $k' > k$ as the opposite would imply $u_{k} \in V_{L}$. As in (2) let $h$ be minimal with $h > k$ such that $u_{h} \in C$ and observe that then also $u_{i} \in C$ for each $i \geq h$. Observe that $P_{u_{k}}' = (u_{i})_{i=k}^{h}$ is a shortest path from $C$ to $u_{k}$ in $G_{R}$. We again distinguish two sub-cases. 
    (a)~Suppose that $h < j$. Then $u_{j-1} \in C$ and $\hat{u}_{j-1} \in L$ as $\hat{u}_{j-1} < u_{j-1}$. This implies $\hat{u}_{j-1}  \in V_{L} \setminus C$ as $p(u_{j-1}) = u_{j}$, a contradiction to our observation above. 
    (b)~Now assume that $h \geq j$. 
    This implies $\hat{u}_{j-1} \in V_{R} \setminus C$ and $\hat u_i\in V_R$ for each $i$ with $k \leq i < j$. Furthermore, $\hat u_i\notin C$ for each $i$ with $k \leq i < j$ as $(u_i)_{i = k}^h$ is a shortest path from $C$ to $u_k$. 
    It follows that $(\hat{u}_{i})_{i=k}^{h}$ is another shortest path from $C$ to $u_{k}$ in $G_{R}$. 
    As the multi-source shortest path search processes the neighbors of $u_j$ in increasing order with respect to $\leq$ and since $\hat{u}_{j-1} < u_{j-1}$, this contradicts $p(u_{k}) = u_{k+1}$. 
\end{proof}

%\bibliography{./references}

\end{document}